
\documentclass{ws-procs9x6}

\usepackage{enumerate}
\usepackage{eucal}
\usepackage{bm}
\usepackage[all]{xy}
\usepackage{graphicx}



\newcommand{\N}{\mathbb{N}}

\newcommand{\Prob}{\mathbf{P}}
\newcommand{\Expec}{\mathbf{E}}
\newcommand{\Varia}{\mathbf{Var}}

\newcommand{\ee}{\mathrm{e}}


\newcommand{\etal}{\emph{et al}}
\newcommand{\ignore}[1]{}


\begin{document}

\DeclareGraphicsExtensions{.eps}

\title{Viral Evolution and Adaptation as a Multivariate Branching 
       Process\footnote{\uppercase{T}his work is supported by 
       \uppercase{CAPES}, \uppercase{CNP}q and \uppercase{FAPESP}.}}

\author{FERNANDO ANTONELI and FRANCISCO BOSCO\footnote{\uppercase{I}n
         memory of \uppercase{F}rancisco \uppercase{A. R. B}osco (1955--2012).}}
\address{Departamento de Inform\'atica em Sa\'ude and \\
         Laborat\'orio de Gen\^omica Evolutiva e Biocomplexidade\\
         Universidade Federal de S\~ao Paulo, S\~ao Paulo, SP, Brazil. \\
         Email: fernando.antoneli@unifesp.br, fbosco@unifesp.br}

\author{DIOGO CASTRO and LUIZ MARIO JANINI}
\address{Departamento de Microbiologia, Imunologia e Parasitologia and \\
         Departamento de Medicina, \\
         Universidade Federal de S\~ao Paulo, S\~ao Paulo, SP, Brazil.\\
         Email: diogo.castro@unifesp.br, janini@unifesp.br}

\footnotetext{\uppercase{U}nabridged version of
\emph{Viral Evolution and Adaptation as a Multivariate Branching Process}\cite{ABCJ11},
in BIOMAT 2012, pp. 217--243. Ed.: R.P.\ Mondaini. World Scientific, 2013.}

\maketitle

\abstracts{
In the present work we analyze the problem of adaptation and evolution of RNA virus
populations, by defining the basic stochastic model as a multivariate branching process 
in close relation with the branching process advanced by Demetrius, Schuster and Sigmund 
(``Polynucleotide evolution and branching processes'', Bull. Math. Biol. 46 (1985) 
239-262), in their study of polynucleotide evolution. We show that in the absence of
beneficial forces the model is exactly solvable. As a result it is possible to prove
several key results directly related to known typical properties of these systems like
(i) proof, in the context of the theory of branching processes, of the lethal
mutagenesis criterion proposed by Bull, Sanju\'an and Wilke (``Theory of lethal
mutagenesis for viruses'', J. Virology 18 (2007) 2930-2939); (ii)  a new proposal for
the notion of relaxation time with a quantitative prescription for its evaluation and
(iii) the quantitative description of the evolution of the expected values in four
distinct regimes: transient, ``stationary'' equilibrium, extinction threshold and lethal
mutagenesis. Moreover, new insights on the dynamics of evolving virus populations can be
foreseen.
}

\section{Introduction}
\label{sec:INTRO}

RNA viruses exhibit a pronounced genetic diversity\cite{D1985}. 
This variability allows RNA virus to better adapt to environmental challenges as
represented by host and therapy pressures\cite{D1998}.
Due to the lack of a proofreading activity of viral RNA polymerases
(average error incorporation rate in the order of $10^{-4}$ per nucleotide, per
replication cycle\cite{SDH92}), short generation times and huge population numbers,
RNA viral populations may be viewed as a collection of particles bearing mutant genomes. 
As a consequence of high mutation rates, frequencies of mutants depend not only on their
level of adaptation but on the probability of being faithfully replicated during viral
genomic RNA synthesis. Several studies have looked at viral diversification processes
as a contributing cause of disease progression and of therapeutic strategies
shortcomings including vaccine trials\cite{D1998,ES05}. 
It has become important to understand the process
by which virus acquire diversity and the dynamics and fluctuations of this diversity in
time. However, understanding viral evolution \emph{in vivo} has proven to be a very
cumbersome accomplishment due to the so many variables present in the interplay
between virus and their hosts.
To name a few; the host defense pressures as the innate and ``cognitive'' immune
responses, the use of antiviral drugs, the turnover rate of virus populations composed
by viral replication and clearance, the elevated mutational rates of RNA virus, and the
possible existence of structured viral reservoirs in infected patients. It is also
important to take into account the size of the viral innoculums at the moment of
infection, how frequent viral populations undergo bottleneck passages within a host, how
differently each infected individual may react to an incoming virus and finally how many
viral variants are tightly associated with differential biological capabilities.

Traditionally, in an effort to make the viral evolution process more palpable, several
groups have addressed this subject from different points of view.
There is a substantial amount of publications that studied virus populations during
their evolution in experimental settings, for instance, cell 
cultures\cite{ELM06}, by challenging the virus with population
bottlenecks\cite{OBD10,OBANED10}, or the introduction of antiviral drugs\cite{CCA01},
including mutagens, or another competing viral population.
Experiment outcomes were evaluated using viral replication kinetics, the
intensity and quality of the observed mutational spectra and virus survival/extinction
as final parameters. 
Several other groups have studied the process of viral evolution 
away from the bench but using mathematical and computational 
tools\cite{WWOLA01,LEDM02,MLPED03,BSW07,ALM09}.
These models are quite tractable but there is always the risk of oversimplification.
To escape from oversimplifying the interplay between virus and hosts a model needs to
incorporate a few hard rules based on previous experimental data which has been
generated by the whole community of investigators addressing viral evolution. 
Based on other groups experimental data and previous mathematical models put forward by
other investigators as the one presented by L\'azaro \etal\cite{LEDM02} we sought to
study a stochastic model for virus evolution that would be able to describe some
general aspects of RNA virus evolution.
Here, RNA viral evolution is described by a multivariate branching process during which 
each round of replication is accompanied by the introduction of a single point mutation 
per genome in the viral progeny.
More explicitly, by a re-interpretation of total mutation probabilities of a genome
as probabilities of ``replicative effects'' we obtain a non-trivial multivariate
generalization of the single-type branching process of Demetrius \etal\cite{DSS85}.

Drake and Holland\cite{DH99} back in 1999 have inferred, based on limited data, a
central value for the RNA virus mutation rate per genome per replication of
$\mu_\mathrm{g} \approx 0.76$ and suggested the rate per round of cell infection of
$\mu_\mathrm{g} \approx 1.5$. 
In 2010, Sanju\'an \etal\cite{SNCMB10} revisiting this theme by reviewing a list of
previous publications encountered RNA virus mutational rates in the order of $10^{-4}$
to $10^{-6}$ with $\mu_\mathrm{g}\approx 4.64$ for the bacteriophage $\mathrm{Q\beta}$ 
(Batschelet \etal\cite{BDW76}) and $\mu_\mathrm{g} \approx 1.15$ for hepatitis C virus 
(Cuevas \etal\cite{CCMS09}).

It has been demonstrated that virus populations may be reduced at the moment of
infection, and only a few particles are able to start a new infection process in naive
hosts\cite{ZDE11,K08}.
Abrupt reductions on RNA viral populations known as population bottlenecks may eliminate
population diversity and lead the virus to pathways towards extinction due to the
exacerbated effects of genetic drift. An incoming virus population recovering from a
transmission bottleneck event may show an asymptotic behavior resembling stationary
equilibrium represented by the balance between two opposite forces classically identified
with mutation and selection. 
This asymptotic behavior would occur if the environment is constant and enough
time is allowed between two successive bottleneck events.
The relaxation time between the bottleneck and the establishment of stationary
equilibrium has been referred to as the ``recovering time'' by Aguirre \etal\cite{ALM09}.

As pointed out by Drake and Holland\cite{DH99}, the basal value of RNA
virus mutation rates is so large and RNA virus genomes are so informationally dense,
that even a modest rate increase extinguishes the population. The frequent appearance of
overlapping reading frames and multifunctional proteins augments the risk of a random
mutation to have a deleterious impact and even more, multiply the effect of deleterious
mutations. For example, the fraction of deleterious mutations out of random mutations
occurring in vesicular stomatitis virus is around $70\%$ (Sanju\'an \etal\cite{SME04}).
If the introduction of a mutagen to a replicating virus population is able to cause its
extinction by increasing mutational rates, the process is known as chemical lethal
mutagenesis and has been demonstrated in a number of viruses including the
vesicular stomatitis virus (VSV)\cite{HDTS90,LGNHDH97}, 
human immunodeficiency virus type 1 (HIV-1)\cite{LEKZRM99},
poliovirus type 1\cite{CCA01,HDTS90}, 
foot-and-mouth disease virus\cite{SDLD00}, 
lymphocytic choriomeningitis virus\cite{GSCDL02}, 
Hanta virus\cite{SSJJ03} and 
Hepatitis C virus\cite{ZLBMR03}. 
Accordingly, in the model, increases on mutational rates, and
more specifically, on the deleterious component of the mutational spectrum are able to
push viral populations towards extinction. Our results corroborate with the study
from Bull, Sanju\'an and Wilke\cite{BSW07} by showing that the sufficient condition for
lethal mutagenesis involves mutational and ecological aspects.
Bull \etal\cite{BSW07} arrived at a conjectural criteria for lethal mutagenesis by
a heuristic and intuitive approach of possible general applicability.
By applying the branching process theory to the evolution of RNA viruses the lethal
mutagenesis inequality proposed by Bull \etal\cite{BSW07} is rigorously proven here. 
Furthermore, we describe four distinct regimes of RNA virus populations: transient
regime, stationary equilibrium, extinction threshold, and extinction through lethal
mutagenesis.

We note that in previous works\cite{ALM09,CCMA11,CACM11,MLPED03} the properties of
phenotypic models are discussed starting from a mean field linear model described by a
mean matrix without reference to any underlying stochastic process modeling the
microscopic dynamics of particle replication.
In fact, almost any stochastic model of asexual replication has an underlying branching
process, which remains implicit and undefined in most of the studies in this subject. 
However, there are a few of them\cite{DSS85,HRWB02,W03} that take a different path
and explicitly define the stochastic process in order to bring the mathematical theory
of branching processes to bear.
This attitude has some virtues, since it provides powerful tools, that have been
perfected in the past several decades, allowing one to extract quantitative results
on a rigorous basis in clear conceptual framework.
Therefore, we shall follow this path, by deducing the matrix of first moments from a
generating function of an explicitly defined stochastic process and explore the
mathematical consequences of such conceptual framework.

\ignore{
Although the two matrices happen to coincide, it is important to stress that
only from the generating function of the underlying stochastic process it is possible 
to fully discuss the validity of the model.
}

\pagebreak

\paragraph*{Structure of the Paper.}
In section~\ref{sec:PMVE} we describe a class of models for viral evolution and
show that they define a multitype branching process. We explicitly compute the
generating function and derive some elementary properties.
In section~\ref{sec:TSPM} we solve the spectral problem for the mean matrix of
the model which allows us to apply the theory of of multitype branching processes
in order to obtain our results.
Finally, in section~\ref{sec:CO} we present our conclusions and directions for
future research.
For the convenience of the reader, in the~\ref{sec:BRTBP} we briefly summarize all 
the definitions and results from the theory of multitype branching processes
that are used in our proofs.

\section{Phenotypic Models for Viral Evolution}
\label{sec:PMVE}

In this section we describe a model for viral evolution that is naturally represented
by a multivariate branching stochastic process generalizing, in a non-trivial way, the 
single-type branching process studied by Demetrius \etal\cite{DSS85}.
For the sake of motivation we start by recalling a probabilistic model introduced 
by L\'azro \etal\cite{LEDM02}.
We interpret the notion of mutation probability as a general effect of
probabilistic nature acting on the replication capability of individual viral particles,
considered here as a measure of the particle's fitness characterizing its phenotype.
This effect is summarized by the definition of a stationary probability 
distribution which is used to set up a Galton-Watson branching process (Watson and 
Galton\cite{WG1874}) for the temporal evolution of the viral population. 
This probability distribution gives appropriate parameters to classify the asymptotic
behavior of the viral population and to describe some of the non-equilibrium properties
of the model. 

In other related publications the concept of mutation is extensively used as the cause
of replication capacity change.
Understanding that those changes constitute an observable output due to many different
factors (of genetic and non-genetic nature), we prefer to use the general term
``effect'' over the replication capacity to characterize the three possible changes 
(deleterious, beneficial and neutral) that may happen with the viral particle when it 
replicates.
The precise definition of the three types of changes are given in the next section. 

\subsection{Definition of the Model}

A number of viral infections starts with the transmission of a relatively small number
of viral particles from one host organism to another one. The initial viral population starts
replicating constrained by the unavoidable interaction with the host organism and
evolves in time towards an eventual equilibrium. Each particle composing the population
replicates in the cellular context that may differ from cell to cell. Moreover each
particle has different replication capabilities due to the natural genomic diversity
found in viral populations in general. Therefore, it is reasonable to consider the viral
population as a set of particles divided in groups of different replication capabilities
measured in terms of the number of particles that one particle can produce. Each of
those groups we call a class; the replication capability of a viral particle is an
output of the process of interaction of that particle carrying its genetic information
with the cell environment. The replication capability is considered as a phenotypic
character of the particle and therefore each class is considered as a set of particles
with a possible genotype diversity expressing the same phenotype trait. The model we
consider here does not take into account any information about the genomic diversity of
any replicating class and therefore it should be classified as phenotypic model. 

We consider that the whole set of particles composing the viral population replicates at
the same time in such a way that the evolution of the population is described as a
succession of discrete viral generations. This assumption crucially depends on the clear
definition of the time needed for a particle to replicate, referred by virologists as
\emph{generation time}. As it depends on the cell environment it is clear that this time
period may vary from particle to particle replicating in different cells in such a way
that the meaningful concept is a distribution of replication times with a possible clear
mean value. The dispersion of the replication times can be considered small if we
restrict ourselves to homogeneous cell populations.
Under these conditions we consider that no particle can be part of two successive
generations. The possible impact of a subset of non replicating particles on the
dynamics of the viral population is left to further studies. 

Suppose that we have a population of viruses that start evolving from an initial
set of particles (population at $t=0$), which is partitioned into \emph{classes}
according to the \emph{replication capacity} of each particle, that is, where each
particle of class $0$ produces no copies of itself, each particle with class $1$
produces one copy of itself, and so on.
We assume that there is a \emph{maximum replication capacity} $R$ imposed by the natural
limiting conditions under which any particle of the population replicates. Moreover, as
the process of replication is controlled by chemical reactions involving specific
enzymes and the template, it is reasonable to assume a mean bounded replication capacity
per particle that is possibly typical for each specific virus.

\enlargethispage{5mm}

In the process of replication of a viral particle errors may occur at each
replication cycle in the form of point mutations with possible impact on the
replication capacity of the progeny particles. Due to the intrinsic stochastic
component of chemical reactions it is natural to treat this point mutational cause as
probabilistic. Another possible cause of change in the replication capability in the
viral offspring is clearly related to the cellular environment where the replication
process takes place. As a result the time evolution of viral populations should be
viewed as a physical process strongly influenced by stochasticity. Therefore we consider
that the combined action of genetic and non-genetic causes may produce basically three
types of replicative effects namely:
\begin{itemize}
\item \emph{deleterious effect}: the replication capacity of the copied particle
      decreases by one. When the particle has capacity of replication equal to $0$
      it will not produce any copy of itself.
\item \emph{beneficial effect}: the replication capacity of the copy increases by one.
      If the replication capacity is already the maximum allowed then the replication
      capacity of the copies will stay the same.
\item \emph{neutral effect}: the replication capacity of the copies is
      the same as the replication capacity of the parental particle.
\end{itemize}
For each type of effect we associate a probability at the particle scale applicable to
every single replication event: $d$ for the probability of the occurrence of a
\emph{deleterious} effect, $b$ for the probability of the occurrence of a
\emph{beneficial} effect and the complementary probability $c=1-b-d$ is the probability
of the occurrence of a neutral effect.
In the case of \emph{in vitro} experiments with homogeneous cell populations the
parameters $c$, $d$ and $b$ may be considered as mutation probabilities.

The \emph{simple phenotypic model} is obtained by requiring that there are no
beneficial effects in time, that is $b=0$.
This assumption is justified by several experimental results. 
The frequencies between beneficial, deleterious and neutral mutations appearing in a
replicating population have been already measured by prior 
studies\cite{MGME99,IS01,O03,SME04,CIE07,EWK07,PFCM07,RBJFBW08}.
Taking their results together, it is reasonable to conclude that beneficial mutations
could be as low as 1000 less frequent than either neutral or deleterious mutations. 
As a result the viral population would be submitted to a large number of successive
deleterious and neutral changes and a comparatively small number of beneficial changes.
Here we shall focus on the case $b=0$ which allows us to obtain exact results.
The case $b \approx 0$ can be treated perturbatively as discussed in Antoneli, Bosco,
Castro and Janini\cite{ABCJ12}.

From what is described above it should become clear that the model assumes a scenario
where a probabilistic processes at the cellular/viral scale take place in
the context of the interaction between the viral particle and the host cell.
The combined effect of small scale processes are observed at the viral population
scale in terms of collective (stable or not) properties.

Based on the general aspects of the phenomenon of viral replication it is compelling to
to model it in terms of a branching process. 
In this perspective we define a \emph{discrete multitype Galton-Watson branching process}
for the evolution of the initial population, where the \emph{classes}
will be represented by the replication capabilities $0,1,\ldots,R$. 
The branching process is described by a sequence of vector-valued random variables
$\{\bm{Z}_n:n\in\N\}$ giving the number of particles in each replication class 
in the $n$-th generation.
Thus $\bm{Z}_n$ are vectors of non-negative integers satisfying the following
assumption: if the size of the $n$-th generation is known, then the probability laws
governing the later generations does not depend on the sizes of generations preceding
the $n$-th, that is the sequence $\{\bm{Z}_n:n\in\N\}$ forms a \emph{markovian process}.
The initial population $\bm{Z}_0$ is represented by a vector of non-negative integers
$\bm{Z}_0=(Z_0^0,Z_0^1,\ldots,Z_0^R)$, which is non-zero and non-random.
The temporal evolution of the population is obtained from a vector-valued discrete
probability distribution $\bm{\zeta}=(\zeta_0,\zeta_1,\ldots,\zeta_R)$ defined on the
set of vectors with non-negative integer entries called the 
\emph{offspring distribution} of the branching process. 
For any vector with non-negative entries $\bm{i}=(i^0,\ldots,i^R)$ one has that
\begin{equation} \label{eq:BPLAW}
 \Prob(\bm{Z}_{n+1}=\bm{i}|\bm{Z}_n=\bm{e}_r)=\zeta_r(\bm{i}) \,,
\end{equation}
where $\bm{e}_r=(0,\ldots,1,\ldots,0)$, with $1$ in the $r$-th position.
Thus, $\zeta_r(\bm{i})$ is the joint probability that an individual particle of 
class $r$ ($0\leqslant r\leqslant R$) generates $i^0$ progeny particles in the class
$0$, $i^1$ progeny particles in the class $1$, \ldots, $i^R$ progeny particles in the
class $R$.
Note that any vector $\bm{Z}_n=(Z_n^0,Z_n^1,\ldots,Z_n^R)$ may be written as a sum
$\sum_r Z_n^r\bm{e}_r$ and since each particle in $\bm{Z}_n$ may be seen as the
initial condition of a new branching process independently of the others, 
equation~(\ref{eq:BPLAW}) determines the probability laws for a general branching 
process as follows
\[
 \Prob\big(\bm{Z}_{n+1}=\bm{i}|\bm{Z}_n={\textstyle\sum}_r Z_n^r\bm{e}_r\big)
 =\prod_r \zeta_r(\bm{i})^{Z_n^r} \,.
\]

In order to compute the offspring probability distribution $\bm{\zeta}$ for the 
simple phenotypic model, we start by observing that $\zeta_r$ is non-zero only when
$\bm{i}$ is of the form $\bm{i}=(0,\ldots,i^{r-1},i^{r},\ldots,0)$ since a particle with
replication capability $r$ can only produce progeny particles of the replication
capability $r$ or $r-1$, moreover the entries $i^{r-1}$ and $i^{r}$ should satisfy
$i^{r-1}+i^{r}=r$.
Thus we just need to compute the probabilities $\zeta_r$ on the vectors of the form
$\bm{i}_k=(0,\ldots,r-k,k,\ldots,0)$.
Suppose that a viral particle $v$ with replication capacity $r$
($0 \leqslant r \leqslant R$) replicates itself producing new virus particles
$v_1,\ldots,v_r$.
For each new particle $v_i$, there are two possible outcomes regarding the type of
change that may occur: neutral or deleterious, with probabilities $c=1-d$ and $d$,
respectively. 
Representing the result of the $i$-th replication event by a variable $X_i$, 
which can assume two values: $0$ if the effect is deleterious (failure) and $1$ 
if the effect is neutral (success), the probability distribution of $X_i$ is that
of a \emph{Bernoulli trial} with probability of occurrence of a neutral effect $c=1-d$
(success), that is,
\[
 \Prob(X_i=k)=(1-d)^k \, d^{1-k} \qquad (k=0,1) \,.
\]
The total number of neutral effects that occur when the original virus particle
reproduces is a random variable $S_r$ given by the sum of all variable $X_i$,
since each copy is produced independently of the others,
\[
 S_r=X_1+X_2+\ldots+X_r \,.
\]
That is, $S_r$ counts the total number of neutral effects (successes)
that occurred in the production of $r$ virus particles $v_1,\ldots,v_r$. 
It also represents the total number of particles that will have the same
replication capacity $r$ of the original particle $v$.
It is well known (see Feller\cite{F68}) that a sum of $r$ independent and identically
distributed Bernoulli random variables with probability $c=1-d$ of success has a
probability distribution given by the \emph{binomial distribution}:
\[
 \Prob(S_r=k)=\mathrm{binom}(k;r,1-d)={r \choose k} \, (1-d)^k \, d^{r-k} \,.
\]
Since this is the probability that a class $r$ virus particle $v$ produces $k$
progeny particles with the same replication capability as itself one has
therefore
\[
 \zeta_r(0,\ldots,r-k,k,\ldots,0)=\Prob(S_r=k)=\mathrm{binom}(k;r,1-d) \,.
\]

Given the offspring probability distribution $\bm{\zeta}$ one may set up a
\emph{probability generating function} $\bm{f}=(f_0,\ldots,f_R)$ which is
defined by the power series
\[
 f_r(z_0,z_1,\ldots,z_R)=\sum_{\bm{i}} \,\zeta_r(\bm{i}) \, z_0^{i^0}\ldots z_R^{i^R} \,.
\]
The probability generating function of the simple phenotypic model is
\begin{equation} \label{EQ:genfunc1}
\begin{split}
 f_0(z_0,z_1,\ldots,z_R) & = 1 \\
 f_1(z_0,z_1,\ldots,z_R) & = dz_0+cz_1 \\
 f_2(z_0,z_1,\ldots,z_R) & = (dz_1+cz_2)^2 \\[-2mm]
                         & \;\;\vdots \\[-2mm]
 f_R(z_0,z_1,\ldots,z_R) & = (dz_{R-1}+cz_R)^R
\end{split}
\end{equation}
Note that the functions $f_r$ are polynomials whose coefficients are exactly
$\mathrm{binom}(k;r,1-d)$.
This function completely determines the branching process.

Now it is easy to obtain the general case where the beneficial effects
have a non-zero contribution $b$. 
In this case, the binomial distribution is replaced by a 
\emph{trinomial distribution} (see Feller\cite{F68}) and the probability
generating function of the general phenotypic model is 
\begin{equation} \label{EQ:genfunc2}
\begin{split}
 f_0(z_0,z_1,\ldots,z_R) & = 1 \\
 f_1(z_0,z_1,\ldots,z_R) & = dz_0+cz_1+bz_2 \\
 f_2(z_0,z_1,\ldots,z_R) & = (dz_1+cz_2+bz_3)^2 \\[-2mm]
                         & \;\;\vdots \\[-2mm]
 f_{R-1}(z_0,z_1,\ldots,z_R) & = (dz_{R-2}+cz_{R-1}+bz_R)^{R-1} \\
 f_R(z_0,z_1,\ldots,z_R) & = (dz_{R-1}+(c+b)z_R)^R
\end{split}
\end{equation}

\begin{remark} \sl \label{rmk:VAR1}
It is worth to mention that there are other variations of these models that share
the same essential properties and are more adequate in different contexts.
\begin{itemize}
\item \textbf{With Zero Class:} In this variation, which is the version deduced above, 
      particles of class $r=0$ are generated by the particles from class $r=1$. 
\item \textbf{Without Zero Class:} In this variation, the particle class $0$ is omitted 
      and thus the probability generating function has $R$ variables and $R$
      components: omit the variable $z_0$, the first component $f_0$ and define
      $f_1(z_1,\ldots,z_R) = d+cz_1+bz_2$.
      Particles of class $r=1$ undergoing a deleterious change are eliminated in the
      next generation.
\end{itemize}
\end{remark}

\subsection{Basic Properties of the Phenotypic Model}

We start by recalling that, when calculating probabilities and expectations,
there is no loss of generality if one considers only initial populations consisting of
just one particle of class $r$ ($0 \leqslant r \leqslant R$), since the general case can
be decomposed as a sum of independent processes with this kind of initial population.
All the relevant properties of the model can be deduced with this simplification. 

We shall introduce the notation $Z_0^r=1$ for the condition $\bm{Z}_0=\bm{e}_r$,
which is the initial population consisting of one particle of class $r$ and zero
particles of other classes.
Thus $\Prob(\bm{Z}_1=\bm{i}|Z^r_0=1)=\zeta_r(\bm{i})$.
A basic assumption in the theory of branching processes is that all the first
moments are finite and that they are not all zero. 
Then one may consider the \emph{mean evolution matrix} or the
\emph{matrix of first moments} $\bm{M}=\{M_{ij}\}$ which describes how the averages
of the sub-populations of particles in each replication class evolves in time:
\[
 M_{ij}=\Expec(Z_1^i|Z_0^j=1) \,,\qquad\forall\, i,j=0,\ldots,R \,.
\]
In terms of the probability generating function one has
\[
 M_{ij}=\dfrac{\partial f_j}{\partial z_i}(1,1,\ldots,1) \,.
\]
Denoting by $\bm{f}'$ the jacobian matrix of $\bm{f}$ one may write
\[
 \bm{M}=\bm{f}'(\bm{1}) \,,
 \quad\text{where}\quad\bm{1}=(1,1,\ldots,1) \,.
\]
The evolution of the averages $\langle \bm{Z}_{n} \rangle$ of $\bm{Z}_n$ 
is given by
\begin{equation} \label{eq:MEANEVOL1}
 \langle \bm{Z}_{n} \rangle=\Expec(\bm{Z}_n|\bm{Z}_0)=\bm{M}^n\,\bm{Z}_0 \,.
\end{equation}

From the generating functions (\ref{EQ:genfunc1}) and (\ref{EQ:genfunc2}) 
it is trivial to compute the mean matrix of the phenotypic model. 
The mean matrix of the simple phenotypic model is
\begin{equation} \label{eq:MEANBASIC}
\bm{M}=\begin{pmatrix}
 0 & d &  0 &  0 &  0 & \cdots & 0 \\
 0 & c & 2d &  0 &  0 & \cdots & 0 \\
 0 & 0 & 2c & 3d &  0 & \cdots & 0 \\
 0 & 0 &  0 & 3c & 4d & \cdots & 0 \\
 0 & 0 &  0 &  0 & 4c & \cdots & 0 \\[-2mm]
\vdots & \vdots & \vdots & \vdots & \vdots & \ddots & Rd \\
 0 & 0 &  0 & 0 & 0 & 0 & Rc
\end{pmatrix}
\end{equation}
Note that it is an upper triangular matrix.

In the case of the general phenotypic model the mean matrix is
\begin{equation} \label{eq:MEANGENERAL}
\bm{M}=\begin{pmatrix}
 0 & d &  0 &  0 &  0 & \cdots & 0 \\
 0 & c & 2d &  0 &  0 & \cdots & 0 \\
 0 & b & 2c & 3d &  0 & \cdots & 0 \\
 0 & 0 & 2b & 3c & 4d & \cdots & 0 \\
 0 & 0 &  0 & 3b & 4c & \cdots & 0 \\[-2mm]
\vdots & \vdots & \vdots & \vdots & \vdots & \ddots & Rd \\
 0 & 0 &  0 & 0 & 0 & \cdots & R(c+b)
\end{pmatrix}~.
\end{equation}
Interestingly, the mean matrix $\bm{M}$ provided by this class of models is a
\emph{tridiagonal matrix}, which is an ubiquitous type of matrix appearing in 
several fields ranging from statistical signal processing\cite{G06},
information theory\cite{GS58}, lattice dynamical systems\cite{ADGW05}.

\begin{remark} \sl
In Demetrius \etal\cite{DSS85} a single type branching process is proposed as a model 
for the evolution of polynucleotides, moreover they state that state that the results
they obtained for the single type model are still valid in a multitype situation,
provided one excludes the possibility of ``back mutations'', i.e., if for every
replicative class $r$ the mutation rates from the classes $s$ to $r$, with
$s \leqslant r$, can be neglected.
This implies, in particular, that the mean matrix is \emph{upper triangular}. 
In fact, it is easy to see that, in the simplest case where the mutation rates
are non-zero only for adjacent classes, it is formally identical to the mean matrix
(\ref{eq:MEANBASIC}).
Moreover, one of the main assumptions of the basic theory of multitype branching
processes is that the mean matrix is \emph{primitive}\cite{H63,AN72}, that is, some
power of the matrix is positive.
However, an upper triangular mean matrix is never primitive and hence the underlying
branching processes, in principle, are out of the reach of the theory.
Fortunately, as it will be shown here, there is a generalization of the theory of
multitype branching processes which is capable of embracing the simple phenotypic model.
\end{remark}

The mean matrix of the phenotypic model can be viewed as the adjacency matrix
of a directed weighted graph where the nodes represent the particle classes
according to their replication capacity and the arrows represent the effect of
decrease or increase of the replication capacity due to the replication process
(see Figure~\ref{fig:GRAPH}).

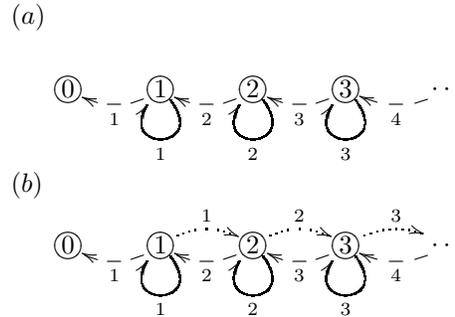
\begin{figure}[!htb]
\begin{center}
\[
\begin{array}{lc}
(a) \\[5mm]
& \xymatrix{
 *+[o][F]{0} & 
 *+[o][F]{1} \ar@/^/@{-->}[l]^1 \ar@(dr,dl)[]^1 & 
 *+[o][F]{2} \ar@/^/@{-->}[l]^2 \ar@(dr,dl)[]^2 & 
 *+[o][F]{3} \ar@/^/@{-->}[l]^3 \ar@(dr,dl)[]^3 & 
\cdots \ar@/^/@{-->}[l]^4
} \\
(b) \\
& \xymatrix{
 *+[o][F]{0} & 
 *+[o][F]{1} \ar@/^/@{-->}[l]^1 \ar@/^/@{.>}[r]^1 \ar@(dr,dl)[]^1 & 
 *+[o][F]{2} \ar@/^/@{-->}[l]^2 \ar@/^/@{.>}[r]^2 \ar@(dr,dl)[]^2 & 
 *+[o][F]{3} \ar@/^/@{-->}[l]^3 \ar@/^/@{.>}[r]^3 \ar@(dr,dl)[]^3 & 
\cdots \ar@/^/@{-->}[l]^4
}
\end{array}
\]
\caption{\label{fig:GRAPH} Graphs of mean matrices. (a) Simple phenotypic model.
         (b) General phenotypic model. The arrows are numbered according to which there
         occurs a deleterious effect ($d$ -- dashed arrows) or a beneficial effect
         ($b$ -- dotted arrows) or neutral effect ($c$ -- solid arrows).
}
\end{center}
\end{figure}

\section{The Simple Phenotypic Model}
\label{sec:TSPM}

For the simple phenotypic model it is easy to compute the eigenvalues
$\lambda_r$ of the mean matrix $\bm{M}$: 
\[
 \lambda_r=rc=r(1-d) \qquad r=0,\ldots,R \,.
\]
In particular, the malthusian parameter is the largest positive eigenvalue
\begin{equation} \label{eq:MALTHPAR}
 m=\varrho(\bm{M})=\lambda_R=Rc=R(1-d) \,.
\end{equation}
Therefore we have the following immediate result.
\begin{theorem} \label{thm:BASIC}
The simple phenotypic model has three distinct regimes.
\begin{enumerate}[(i)]
\item If $R(1-d)<1$ then the branching process is \emph{sub-critical}.
      That is, with probability $1$, the virus population becomes extinct
      in finite time.
\item If $R(1-d)>1$ then the branching process is \emph{super-critical}.
      That is, with positive probability, the virus population survives
      and grows indefinitely at an exponential rate proportional to $m^n$
      when $n\to\infty$.
\item If $R(1-d)=1$ then the branching process is \emph{critical}.
      That is, with probability $1$, the virus population becomes
      extinct but this may take an infinite time to happen. 
\end{enumerate}
\end{theorem}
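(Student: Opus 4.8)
The plan is to sidestep the indecomposable three-regime theorem of Section~\ref{sec:BRTBP} --- which does not apply here, since in the version ``with zero class'' every singleton $\{r\}$ is its own path component and the process is decomposable --- and instead to exploit the ``downward cascading'' structure of the mean matrix \eqref{eq:MEANBASIC}. Two harmless conventions keep the statement clean: I assume $0<d<1$, so the offspring law is genuinely random and no singular path component is present (the boundary values $d\in\{0,1\}$ are elementary and fit the trichotomy, apart from the trivial immortal instance $R=1$, $d=0$, which the statement tacitly excludes); and, for the assertions (ii) and (iii), I assume the founding population contains at least one particle of maximal class, $Z_0^R\geqslant 1$, the natural situation for a viral inoculum and the one in which $R(1-d)$ --- rather than some $k(1-d)$ with $k<R$ --- is the relevant (effective) Malthusian parameter. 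The first step is then the spectral computation already recorded: since $\bm{M}$ in \eqref{eq:MEANBASIC} is upper triangular, its eigenvalues are its diagonal entries $\lambda_r=r(1-d)$, so $m=\varrho(\bm{M})=R(1-d)$ is a simple eigenvalue that strictly dominates all the others.

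The key observation I would isolate is that the coordinate process $\{Z_n^R:n\in\N\}$ of particles of maximal replication capacity is, on its own, an ordinary single-type Galton-Watson process. Indeed, a class-$R$ particle produces exactly $R$ offspring, each independently of class $R$ (neutral effect) with probability $c=1-d$, and --- because $b=0$ and there is no class $R+1$ --- no particle of any other class ever produces a class-$R$ particle. Hence $\{Z_n^R\}$ has binomial offspring law $\mathrm{binom}(\cdot\,;R,1-d)$, with mean exactly $m=R(1-d)$, and the classical trichotomy for single-type processes applies to it directly.

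The three cases now follow. If $m>1$, the top-class sub-process is super-critical, hence survives with positive probability, and on that event $Z_n^R\to\infty$; since $|\bm{Z}_n|\geqslant Z_n^R$ the whole population grows without bound, and the sharp rate $\bm{Z}_n/m^n\to W\bm{u}$ (almost surely, with $W\neq 0$ on an event of positive probability) follows from the decomposable Kesten-Stigum theorem quoted in Section~\ref{sec:BRTBP}: the effective Malthusian parameter of the chain of sub-processes ending at $C_R=\{R\}$ is $\max_{l\leqslant R} l(1-d)=R(1-d)=m>1$, and the ``$\zeta\log\zeta$'' condition holds automatically for a finitely supported offspring law. If $m<1$, then $\varrho(\bm{M})<1$, so by \eqref{eq:MEANEVOL2} one has $\langle\bm{Z}_n\rangle=\bm{M}^n\langle\bm{Z}_0\rangle\to\bm{0}$, and Markov's inequality gives $\Prob(\bm{Z}_n\neq\bm{0})\leqslant\Expec(|\bm{Z}_n|)=|\langle\bm{Z}_n\rangle|\to 0$; since the events $\{\bm{Z}_n=\bm{0}\}$ increase in $n$, extinction occurs almost surely and necessarily at a finite time. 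If $m=1$, then $\{Z_n^R\}$ is critical, so $Z_n^R\to 0$ almost surely while the hitting time of $0$ has infinite expectation; once $Z_n^R=0$ it stays $0$ forever, and from that moment on $\{Z_n^{R-1}\}$ evolves as a single-type process of mean $(R-1)(1-d)<1$, hence is extinguished almost surely --- cascading this argument down through classes $R-2,\ldots,0$ (each strictly sub-critical once it no longer receives input from above, and class $0$ sterile) yields $\bm{Z}_n\to\bm{0}$ almost surely, while the total extinction time dominates that of $\{Z_n^R\}$ and so has infinite expectation.

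I expect the critical case $m=1$ to be the only genuine obstacle. There the first-moment bound is useless: $\bm{M}$ has spectral radius $1$ with $1$ a simple eigenvalue, so $\bm{M}^n$ converges to a nonzero rank-one projection and $\langle\bm{Z}_n\rangle$ does not tend to $\bm{0}$; almost-sure extinction therefore cannot be read off the mean and must be extracted from the honest criticality of $\{Z_n^R\}$ together with the strict sub-criticality of every lower class once the top class is gone, and the infinite expected extinction time likewise rests on the classical critical single-type estimate. The sub-critical and super-critical cases are, as the word ``immediate'' in the statement suggests, little more than bookkeeping on top of the spectral computation and the results assembled in Section~\ref{sec:BRTBP}.
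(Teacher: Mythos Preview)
Your argument is correct, and it is genuinely different from the paper's. The paper's proof is a one-liner: it simply invokes ``the main result about the classification of a multitype branching process'' together with the spectral computation $m=R(1-d)$. In other words, the paper treats the trichotomy as a black box consequence of the general theory surveyed in Section~\ref{sec:BRTBP} (the Sevastyanov extension of the indecomposable classification), whereas you deliberately bypass that black box --- observing, as the paper itself does, that the simple phenotypic model is decomposable into singleton path components and that the indecomposable trichotomy does not literally apply --- and instead build the three regimes by hand from the upper-triangular structure of $\bm{M}$. Your key device, isolating $\{Z_n^R\}$ as an autonomous single-type Galton-Watson process with binomial$(R,1-d)$ offspring and then cascading downward through the strictly sub-critical lower classes, is not used in the paper at all; it gives a self-contained proof that does not lean on a decomposable classification theorem the paper never states in full, and it makes transparent both the role of the assumption $Z_0^R\geqslant 1$ and why the critical case really has infinite expected extinction time. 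What the paper's approach buys is brevity and a direct connection to the Malthusian parameter as the organizing invariant; what yours buys is rigor and an explicit mechanism, at the cost of a longer argument.
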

\begin{proof}
This is a straightforward consequence of the classification of multitype branching
processes, as generalized by Sevastyanov\cite{H63,J70}, which is necessary to include
the simple phenotypic model, and equation (\ref{eq:MALTHPAR}).
\end{proof}

Theorem~\ref{thm:BASIC} provides a partition of the parameter space of the simple 
phenotypic model $\{(d,R):d\in [0,1],\,R\in\N\}$ into two regions
(see Figure~\ref{fig:BASIC1}).
The \emph{survival region} defined by $R>1/(1-d)$ and the \emph{extinction region}
defined by $R<1/(1-d)$.
The curve $R=1/(1-d)$ gives the \emph{extinction threshold}.

\begin{figure}[!hbt] 
\begin{center}
 \includegraphics[scale=0.4,angle=0]{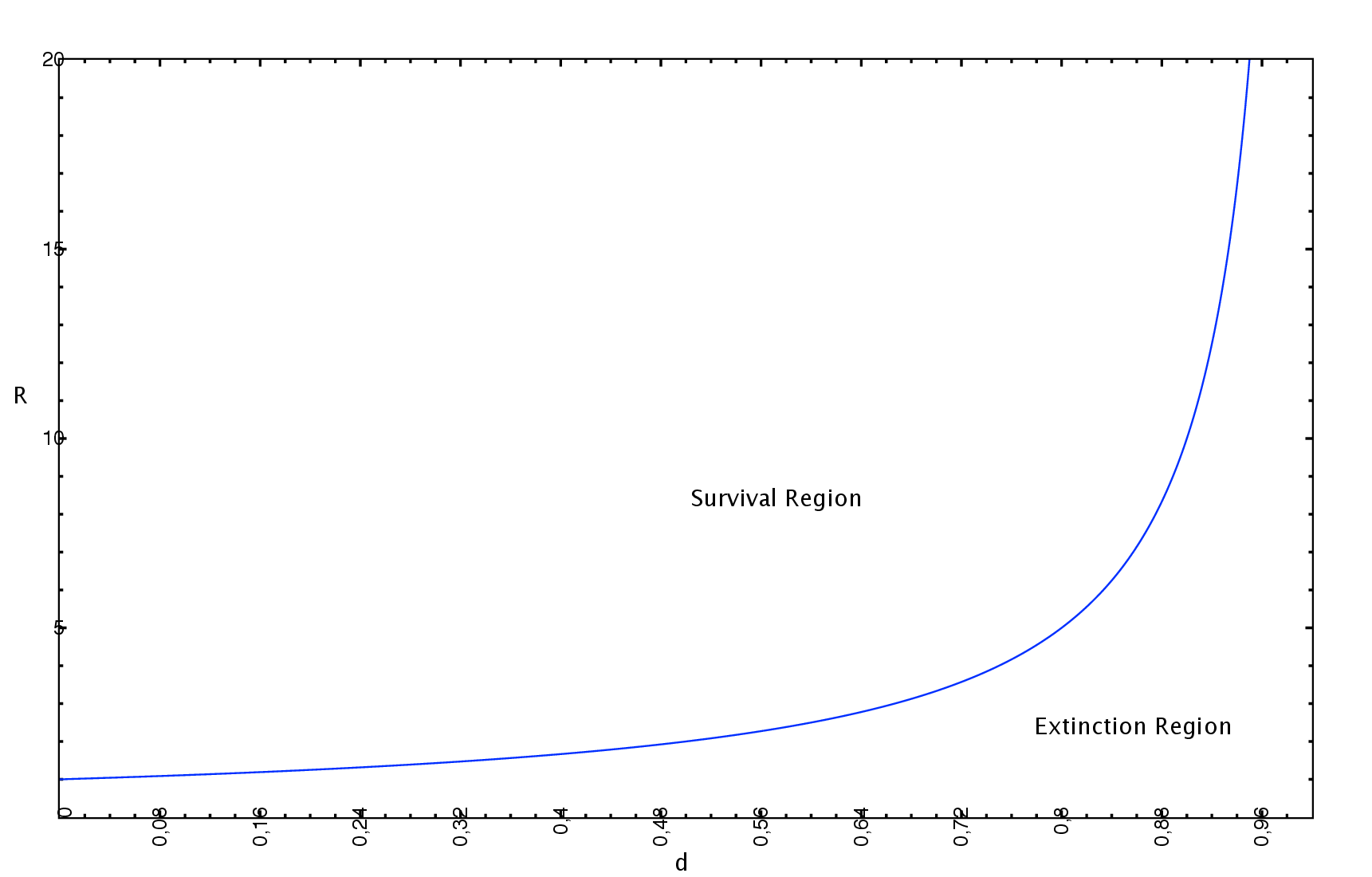}
 \caption{\label{fig:BASIC1} Graph of the function $R=1/(1-d)$ (in blue).
          The region below this curve corresponds to the sub-critical
          parameters $(d,R)$ and the region above this curve corresponds to the
          super-critical parameters $(d,R)$.
          The curve itself corresponds to the critical parameters $(d,R)$.}
\end{center}
\end{figure}

It is also important, specially in order to describe the asymptotic behaviour
in the super-critical case, to know the left eigenvectors $\bm{v}$
and right eigenvectors $\bm{u}$ corresponding to the eigenvalue $\lambda_R$
Let us write the left and right eigenvectors in components as
\[
 \bm{v}=(v_0,v_1,\ldots,v_R)
 \quad\text{and}\quad
 \bm{u}=(u_0,u_1,\ldots,u_R)
\]
and assume that they are normalized in the following way:
\[
 \bm{v}^{\mathrm{t}}\bm{u}=1 
 \quad\text{and}\quad
 \bm{1}^{\mathrm{t}}\bm{u}=1 \,.
\]
Then we have the following.
\begin{enumerate}[(i)]
\item In the version ``with zero class'' the left eigenvector $\bm{v}$ is given by
      \[
       \bm{v}=\dfrac{1}{(1-d)^{R}} \, (0,\ldots,0,1)
      \]
      and the right eigenvector $\bm{u}$ have coordinates $u_k$ given by
      \[
       u_k={R \choose k} \, (1-d)^k \, d^{R-k}=\mathrm{binom}(k;R,1-d) \,.
      \]
\item In the version ``without zero class'' there is no components $v_0$ and $u_0$.
      The left eigenvector $\bm{v}$ is given by
      \[
       \bm{v}=\dfrac{1-d^R}{(1-d)^{R}} \, (0,\ldots,0,1)
      \]
      and the right eigenvector $\bm{u}$ have coordinates $u_k$ given by
      \[
       u_k=\dfrac{1}{1-d^R} \, {R \choose k} \, (1-d)^k \, d^{R-k} \,.
      \]
\end{enumerate}
It is interesting to note that the simple phenotypic model is a
``completely solvable'' branching process in the sense that we
may explicitly solve the spectral problem for its mean matrix 
independently of the numerical values of the parameters.

Next we turn to the computation of the extinction probabilities $\gamma_r$.
In this case, it is necessary to solve a non-linear system of polynomial equations:
\begin{equation} \label{eq:SYSEQ}
\begin{split}
 z_0 & = 1 \\
 z_1 & = dz_0+(1-d)z_1 \\
 z_2 & = \big(dz_1+(1-d)z_2\big)^2 \\[-3mm]
     & \;\;\vdots \\[-3mm]
 z_R & = \big(dz_{R-1}+(1-d)z_R\big)^R
\end{split}
\end{equation}
This may be done in a recursive way, since the equation for $z_0$ is already
solved $z_0=1$ and the equation for $z_k$ depends only on $z_k$ and $z_{k-1}$.
Thus we get for $R=0,1,2$:
\[
\begin{split}
 \gamma_0 & = 1 \\
 \gamma_1 & = 1 \\
 \gamma_2 & = \left\{\begin{array}{l@{\quad\text{for}\quad}l}
          d^2/(1-d)^2 & 0\leqslant d \leqslant \tfrac{1}{2} \\[2mm]
          1 & \tfrac{1}{2}\leqslant d \leqslant 1
         \end{array}\right.
\end{split}
\]
When $R\geqslant 3$ the formulas become very complicated and when $R\geqslant 5$
the equation may not even be solvable by radicals, but in general one may write
\[
  \gamma_r = \left\{\begin{array}{l@{\quad\text{for}\quad}l}
          f(d) & 0\leqslant d \leqslant d_c \\[2mm]
          1 & d_c \leqslant d \leqslant 1
         \end{array}\right.
\]
where $d_c=\tfrac{r-1}{r}$ and $f(d)$ is a strictly increasing smooth function on
$[0,1[$ satisfying: (i) $f(0)=0$, (ii) $f(d_c)=1$, (iii) $f(d)<1$ 
for $0\leqslant d < d_c$ and (iv) $\lim_{d\to 1}f(d)=+\infty$.
This expression suggests that the surviving probabilities $\omega_r=1-\gamma_r$
can be interpreted as an \emph{order parameter} associated to the occurrence of a
\emph{phase transition} when the deleterious probability $d$ attains the
critical point $d_c=\tfrac{r-1}{r}$, which marks the transition from
super-criticality to sub-criticality,
\[
  \omega_r = \left\{\begin{array}{l@{\quad\text{for}\quad}l}
          g(d) & 0\leqslant d \leqslant d_c \\[2mm]
          0 & d_c\leqslant d \leqslant 1
         \end{array}\right.
\]
where $g(d)=1-f(d)$ and thus satisfies: (i) $g(0)=1$, (ii) $g(d_c)=0$, 
(iii) $g(d)>0$ for $0\leqslant d < d_c$ and (iv) $\lim_{d\to 1}g(d)=-\infty$.
Observe that for a fixed numerical value of $d$, the system of equations
(\ref{eq:SYSEQ}) can be easily solved by numerical approximation using Newton's method.
For instance, in Figure~\ref{fig:BASIC2} we show the curves for the surviving
probability function $\omega_r(d)$.

\begin{figure}[!htb] 
\begin{center}
 \includegraphics[scale=0.19,angle=0]{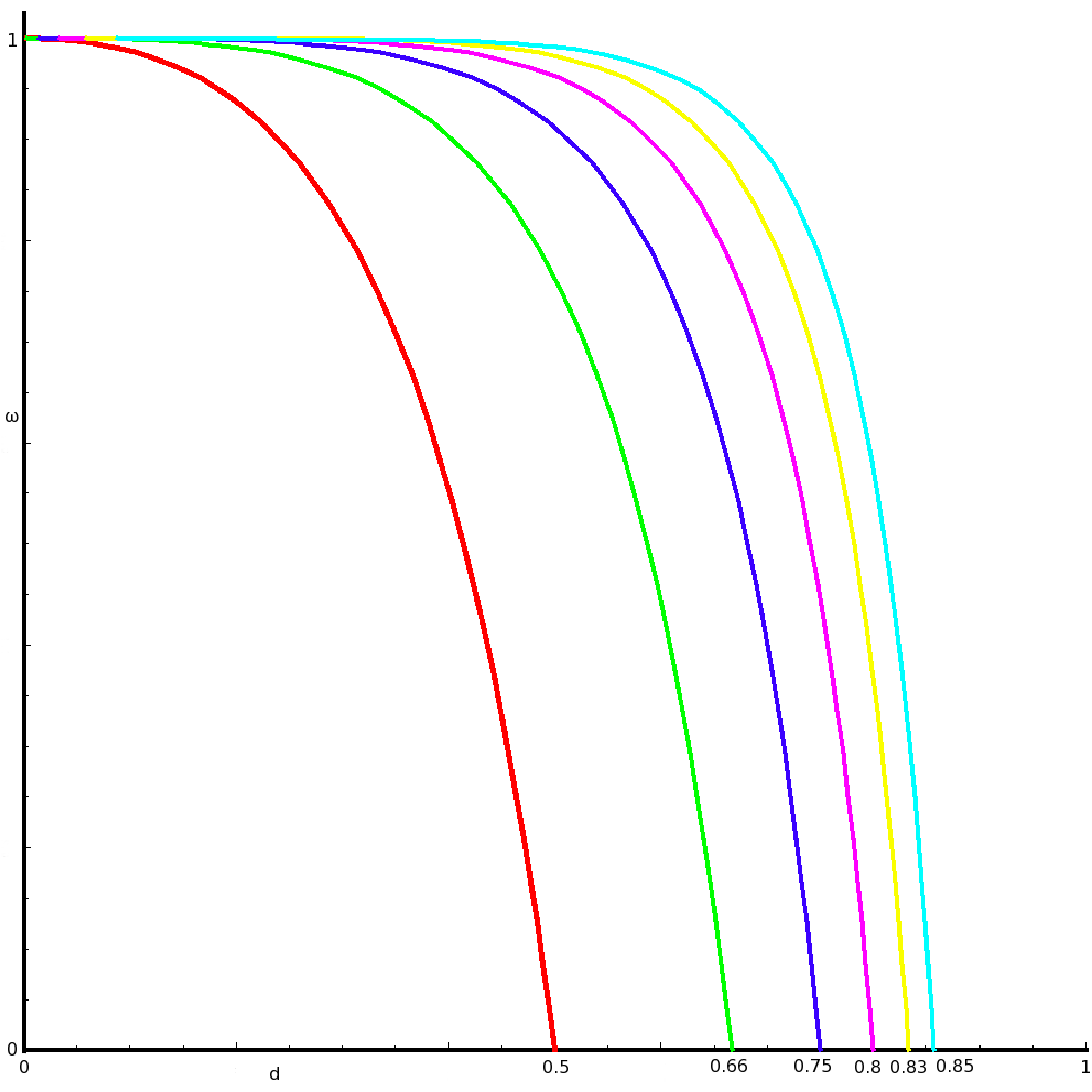}
 \caption{\label{fig:BASIC2} Curves for the surviving probability $\omega_r(d)$
          as function of $d$ for $r=2,\ldots,7$.}
\end{center}
\end{figure}

The result shows that, with respect to $\omega_r$, the model has a critical behavior
in complete analogy to a second order phase transition (see Figure~\ref{fig:BASIC2}).
Therefore, the critical properties of the model can be characterized by means of
relevant critical exponents.

Finally, it is not difficult to see that for fixed $d$, the numbers $\gamma_r$ satisfy
$1\geqslant\gamma_2\geqslant\gamma_3\geqslant\ldots\geqslant\gamma_R$
and therefore the extinction probability for a general initial condition 
$\bm{Z}_0=(Z_0^0,\ldots,Z_0^R)$ may be estimated far from the critical deleterious
probability $d_c=(R-1)/R$ by
\begin{equation} \label{eq:SURVPROB1}
 \Prob(\bm{Z}_n=0 \;\text{for some $n$}|\bm{Z}_0)~\approx~\gamma_2^{|\bm{Z}_0'|} \,,
\end{equation}
where $|\bm{Z}_0'|=Z_0^2+\ldots+Z_0^R$ and near $d_c=(R-1)/R$ by
\begin{equation} \label{eq:SURVPROB2}
 \Prob(\bm{Z}_n=0 \;\text{for some $n$}|\bm{Z}_0)~\approx~\gamma_R^{Z_0^R} \,.
\end{equation}

It has been demonstrated that large population passages are able to increase the
adaptability of virus populations\cite{LEDM02}. 
On the other hand, small population passages represented by bottleneck events are 
capable to increase the risk towards viral extinction. 
Among the aspects of abrupt population reductions are the exacerbated
effects of drift that coupled with the Muller's hatchet principle\cite{M64} 
may lead to the random and progressive lost of the best adapted virus in a population. 
It also has been suggested that large virus populations bearing a significant phenotypic
diversity are more adaptable to environment fluctuations and robust. It is correct to
assume that large initial virus populations colonizing new hosts may show better
survival probabilities than populations recovering from bottlenecks. In this way the
size of the viral innoculums may have an impact in the survival rates of different virus
populations.
It is important to note that the existence of a clear cut between regimes of survival
and non survival populations by means of a critical state is directly related to the
problem of lethal mutagenesis for viral populations.

From now on we shall split the analysis of the simple phenotypic model
according to which it is sub-critical, super-critical or critical.

\subsection{The Sub-critical Regime: Lethal Mutagenesis}

The first consequence of our results is a generalization, in the context of the
phenotypic model (provided one assumes that all effects are of purely mutational nature),
of the conjecture of \emph{lethal mutagenesis} of Bull, Sanju\'an and Wilke\cite{BSW07}.
Recall that Bull \etal\cite{BSW07} assume that all mutations are either neutral or
deleterious and write the mutation rate $U=U_d+U_c$ where the component $U_c$ comprises
the purely neutral mutations and the component $U_d$ comprises the mutations with a 
deleterious fitness effect.
Let $R_{\mathrm{max}}$ denote the maximum reproductive capacity among all particles in
the viral population. 
The \emph{extinction criterion} proposed by Bull \etal\cite{BSW07} states that a
sufficient condition for extinction is 
\begin{equation} \label{eq:LETHALMUT}
 \ee^{-U_d}R_{\mathrm{max}} < 1 \,.
\end{equation}
According to Bull \etal\cite{BSW07}, the factor $\ee^{-U_d}$ is both the mean fitness
level and also the proportion of offspring with no non-neutral mutations.
In the absence of beneficial mutations the only type of non-neutral mutations are the
deleterious mutations and hence 
\begin{equation} \label{eq:FUNDREL}
 \ee^{-U_d}~=~c~=~1-d \,.
\end{equation}

More precisely, if $\langle\bm{Z}_n\rangle=(Z^0,\ldots,Z^R)$ denotes the mean population
at time $n$ then the offspring at time $n+1$ is given by 
$\langle\bm{Z}_{n+1}\rangle=\bm{M}\langle\bm{Z}_n\rangle$ and thus
\[
\begin{split}
\ee^{-U_d} 
& = \dfrac{cZ^1+2cZ^2+\ldots+RcZ^R}{dZ^1+cZ^1+2dZ^1+2cZ^2+\ldots+RdZ^R+RcZ^R} \\
& = \dfrac{c}{d+c} \, \dfrac{Z^1+2Z^2+\ldots+RZ^R}{Z^1+2Z^2+\ldots+RZ^R} \\  
& = c = 1-d \,.
\end{split}
\]

Since the maximum reproductive capacity among all particles in the viral population in
our model is given by the maximum number of replicative classes $R$, it follows that
$R_{\mathrm{max}}=R$.
Therefore, the extinction criterion (\ref{eq:LETHALMUT}) is equivalent, in the context
of the simple phenotypic model, to
\begin{equation} \label{eq:LETHALMUTALT}
 (1-d)R < 1 \,,
\end{equation}
which is exactly the condition for the model to be sub-critical.

\begin{corollary} \label{thm:CORLM}
In the simple phenotypic model, the virus population becomes extinct in finite time,
with probability $1$, if the product of the \emph{neutral effect probability}
$(1-d)$ with the \emph{maximum replication capacity} $R$ is strictly less than $1$.
\end{corollary}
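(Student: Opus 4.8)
The plan is to read Corollary~\ref{thm:CORLM} as what it actually is: a restatement of part~(i) of Theorem~\ref{thm:BASIC} in the vocabulary of the \emph{neutral mutation probability} $1-d$. The single piece of genuine content is equation~\eqref{eq:MALTHPAR}, which identifies the Malthusian parameter of the simple phenotypic model as $m=\varrho(\bm{M})=R(1-d)$. Once that identification is in place, the hypothesis $(1-d)R<1$ is \emph{verbatim} the inequality $m<1$, and part~(i) of Theorem~\ref{thm:BASIC} then delivers the conclusion that the population dies out in finite time with probability one.

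First I would recall why $m=R(1-d)$: the mean matrix \eqref{eq:MEANBASIC} is upper triangular, so its spectrum is the set of diagonal entries $\lambda_r=r(1-d)$ for $r=0,\ldots,R$, and the largest of these is $\lambda_R=R(1-d)$. Because the simple phenotypic model is \emph{decomposable} --- its graph has path components the singletons $\{0\},\{1\},\ldots,\{R\}$ --- the indecomposable trichotomy cannot be quoted directly; instead one invokes the effective Malthusian parameter of Section~\ref{sec:BRTBP}, namely $m_{\mathrm{e}}(R)=\max_{l\leqslant R}\{l(1-d)\}=R(1-d)$, together with Sevastyanov's refinement. This forces one small verification: there must be no \emph{singular} path component. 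For $r=0$ the component function $f_0\equiv 1$ is not linear homogeneous, and for $1\leqslant r\leqslant R$ the generator $f_r=(dz_{r-1}+(1-d)z_r)^r$ is genuinely non-linear whenever $0<d<1$. Hence there is no obstruction to extinction, and $m_{\mathrm{e}}(R)<1$ places the process in the sub-critical regime, so $\bm{Z}_n\to\bm{0}$ almost surely with a finite extinction time.

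An equally good, and fully self-contained, alternative is to argue straight from the fixed-point equation~\eqref{eq:FIXEDPOINT}. Solving $\bm{f}(\bm{\gamma})=\bm{\gamma}$ recursively through the triangular system~\eqref{eq:SYSEQ}, the first line forces $\gamma_0=1$, the second (for $d>0$) forces $\gamma_1=1$, and each subsequent line reads $\gamma_r=(d\gamma_{r-1}+(1-d)\gamma_r)^r$, involving only $\gamma_r$ and $\gamma_{r-1}$. Assuming inductively that $\gamma_{r-1}=1$, the scalar map $\phi(s)=(d+(1-d)s)^r$ on $[0,1]$ is convex, fixes $s=1$, and has $\phi'(1)=r(1-d)\leqslant R(1-d)<1$; a convex map on $[0,1]$ fixing the right endpoint with slope there strictly below one cannot have a second fixed point in $[0,1)$, so $\gamma_r=1$. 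Thus $\bm{\gamma}=\bm{1}$ is the only solution of~\eqref{eq:FIXEDPOINT} in the cube, and since the extinction probability vector is the minimal non-negative solution of~\eqref{eq:FIXEDPOINT} in the unit cube, one gets $\bm{\gamma}=\bm{1}$, which is precisely almost-sure extinction.

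There is no genuinely hard step remaining: the substantive work --- diagonalising $\bm{M}$, and the fact that a decomposable process with no singular component and effective Malthusian parameter below one becomes extinct --- was already carried out in Section~\ref{sec:BRTBP} and in the proof of Theorem~\ref{thm:BASIC}. The only points requiring care are the bookkeeping between $d$ and $c=1-d$, and the degenerate boundary cases: when $d=1$ every particle is immediately sterile and extinction is trivial, while when $d=0$ one has $m=R\geqslant 1$ and genuine singular components appear --- but then the hypothesis $(1-d)R<1$ already fails for $R\geqslant 1$, so that case never arises under the stated assumption.
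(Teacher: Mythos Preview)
Your primary argument matches the paper's: the corollary is simply part~(i) of Theorem~\ref{thm:BASIC}, which itself just applies the classification theorem together with $m=R(1-d)$ from~\eqref{eq:MALTHPAR}. The paper gives no separate proof of the corollary, so your explicit handling of decomposability and the absence of singular components is, if anything, more careful than the original.

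One small slip: for $r=1$ the generator $f_1=dz_0+(1-d)z_1$ \emph{is} linear, so your claim that $f_r$ is ``genuinely non-linear'' for all $1\leqslant r\leqslant R$ fails there. This does not damage the argument, because the path component $\{1\}$ is still non-singular (a class-$1$ particle lands in class~$0$ with probability $d>0$, hence does not remain in $\{1\}$ almost surely), but the justification for $r=1$ should be rephrased accordingly.

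Your alternative route via the recursive solution of~\eqref{eq:SYSEQ} is a genuinely different, self-contained argument that the paper does not use for this corollary. The paper does analyse~\eqref{eq:SYSEQ} later, but only to exhibit the order-parameter behaviour of $\omega_r$, not to prove extinction. Your convexity argument---$\phi(s)=(d+(1-d)s)^r$ convex on $[0,1]$ with $\phi(1)=1$ and $\phi'(1)=r(1-d)<1$ forces the unique fixed point $\gamma_r=1$---is elementary and sidesteps the decomposable Sevastyanov machinery entirely, at the cost of being tailored to the triangular structure of this particular model.
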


In another work, Bull \etal\cite{BSW08} suggest a modification of the extinction
threshold eq. (\ref{eq:LETHALMUT}) that accounts for beneficial effects as long as
they do not couple the deleterious ones (see Antoneli \etal\cite{ABCJ12} for a more
general result).

The main conclusion here is that the existence of lethal mutagenesis depends on
``genetic components'' (mutational rates) and other additional deleterious effects
(host driven pressures intensifications), as well as on strict ``ecological
components'', namely, the maximum replication capacity of the particles in the
population and on the initial population size.
As a result the viral population may reach extinction by increasing the number of
deleterious mutations per replication cycle, by decreasing the value of $R$ in the
population or by a combination of the two mechanisms. 
The mutational strategy is the basis of treatments using mutagenic drugs\cite{CCA01}
that induce errors in the generation process of new viral particles reducing their
replication capacity. 
A straightforward consequence of extinction criterion eq. (\ref{eq:LETHALMUT})
or eq. (\ref{eq:LETHALMUTALT}) is that a single particle
showing the maximum replication capacity $R$ is able to rescue a viral population driven
to extinction by mutagenic drugs. 
If it is assumed that RNA virus populations correspond to a swarm of variants with
distinct replication capacities, for a therapy to become effective it is important that
it will eliminate the classes represented by particles with highest replication
capacities.    
As a conclusion the higher the replication capacity of the first particles infecting the
organism the larger should be the number of deleterious mutations (or effects) and
therefore the larger should be the drug concentration. This can be a clear limitation
for treatments based on mutagenic drugs.

\subsection{The Super-critical Case: Relaxation and Equilibrium}

In the super-critical regime, the population grows at a geometric pace indefinitely.
Nevertheless, there are two distinct phases that occur during this growth:
a transient phase (``relaxation''or ``recovery time'') and a dynamical stationary phase.

\subsubsection{Relaxation towards equilibrium.}

An important question concerning the adaptation process of a viral population to the
host environment is the typical time needed to achieve the equilibrium state. As the
equilibrium is characterized by constant mean replication capacity an obvious criteria
to measure the time to achieve equilibrium would be by the vanishing variation of this
variable as used in other studies (Aguirre \etal\cite{ALM09}). 
Nevertheless, this method is clearly subjected to the limitations of numerical accuracy
with evident disadvantages if one wants a sharp and universal criterion to differentiate
populations from the point of view of how fast a population can be typically stabilized
in a organism. 

\pagebreak

Viral populations are commonly submitted to transient regimes. 
As pointed out earlier the infection transmission process represents the passage of a
small number of particles from one organism to another in such a way that in this process
the viral population is submitted to a subsequent \emph{bottle-neck effect} during
spreading of viruses in the host population. 
In order to approach the problem of relaxation after a bottleneck process in a more sound
basis the natural quantity to be considered is the characteristic time derived from the
decay of the mean auto-correlation function.
The temporal correlation function $C(n)$ is typically of the form $\exp(-\alpha n)$ and
the decay rate is given by the parameter $\alpha$.  
The natural way to define a characteristic time $T$ to achieve equilibrium is by setting
$T=1/\alpha$.
In order to find the characteristic decay rates one should consider the
recursive application of the mean matrix $\bm{M}$ on the initial population:
$\bm{Z}_0^{\mathrm{t}}\,\bm{M}^n \bm{Z}_0$.
In fact, it is enough to consider the canonical initial population
$\bm{Z}_{0}=\bm{e}_R=(0,0,...,1)$. 
By direct inspection it is easily verified that the decay of correlations is typically
exponential and given by 
\[
 C(n)=\exp\big(-\log(R(1-d))\,n\big) \,,
\]
where $m=R(1-d)$ is the malthusian parameter.
The decay rate is therefore given by $\alpha=\log(R(1-d))$.

Among others, one possible application of this result relates to the very initial phase
of the infection process. If we consider that during this phase the host immune
system has not been yet stimulated against the virus, one can assume that
the deleterious effects would be solely represented by the viral intrinsic mutation
rates. Therefore, the largest the value of $R$, i.e., the largest the replication
capacity of the initial viral particle the fastest the progeny auto-correlation decays
and reaches equilibrium stabilizing the viral population; intuitively the parameter $R$
defines the degree of virulence of the infection during the early stage of the infective
process. The increment of deleterious effects plays an opposite role on the decay rates.
In fact, as it will be shown below the closest the parameter $d$ is to its critical
value $d_c$ more time is needed to achieve equilibrium.

\subsubsection{The Dynamical Stationary State.}

When the simple phenotypic model is super-critical and is initialized with 
exactly one particle in the class $r$ ($Z_0^r=1$) the effective malthusian parameter is
$m_{\mathrm{e}}=\lambda_r=rc=r(1-d)$ with corresponding normalized right eigenvector
$\bm{u}(r)=\big(u_0(r),\ldots,u_r(r),0,\ldots,0\big)$, where the components $u_k(r)$,
with $k=0,\ldots,r$, are
\begin{equation} \label{eq:NREIGEN}
 u_k(r)=\mathrm{binom}(k;r,1-d)={r \choose k} \, (1-d)^k \, d^{r-k} \,.
\end{equation}

Therefore, the simple phenotypic model has $R-1$ distinct asymptotic distributions of
types of particles, describing $R-1$ distinct \emph{dynamical stationary states},
characterized by their \emph{asymptotic distribution of classes} given 
by~\eqref{eq:NREIGEN} (up to a random scalar perturbation), each one of these being
achieved when the branching process is initialized with exactly one particle in the
class $r$ ($Z_0^r=1$) for $r=2,\ldots,R$, respectively.
Note that when $r=0,1$ the process is always sub-critical.

\begin{theorem} \label{thm:STATIONARYSTATE}
If the simple phenotypic model is super-critical with malthusian parameter $m=R(1-d)$ 
and starts with at least one particle of class $R$ then, in the long run, the relative
number of particles in each class reaches a stable stationary dynamical state and is
(up to a random scalar perturbation) distributed according to the 
\emph{Binomial Distribution:} $\mathrm{binom}(k;R,1-d)$, where $k=0,\ldots,R$ are the
replication classes.
\end{theorem}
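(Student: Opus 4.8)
The plan is to combine three ingredients already assembled in the paper: (i) the explicit eigenvalue computation $\lambda_r=r(1-d)$, giving $m=R(1-d)$ when we start from class $R$; (ii) the fact that, for an initial population consisting of a single particle of class $R$, the sub-processes $\bm{Z}_n(l)$ with $l>R$ are empty, so the relevant mean matrix is the full $\bm{M}$ and the effective malthusian parameter is simply $m_{\mathrm{e}}(R)=m=R(1-d)$, which exceeds $1$ by the super-criticality hypothesis; and (iii) the Kesten--Stigum theorem in the reducible case, as stated in Section~\ref{sec:BRTBP}, which applies because the relevant eigenvalues $m(l)=l(1-d)$ are simple and distinct, and the ``$\zeta\log\zeta$'' condition holds automatically since the offspring distribution has finite support. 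The conclusion of that theorem gives $\bm{Z}_n/m^n\to W\,\bm{u}$ almost surely for a scalar random variable $W\neq 0$ (on the survival event), where $\bm{u}=\bm{u}(R)$ is the normalized right eigenvector of $\bm{M}$ for the eigenvalue $m$.

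First I would observe that starting ``with at least one particle of class $R$'' reduces to the single-particle case: if $\bm{Z}_0=\sum_r Z_0^r\bm{e}_r$ with $Z_0^R\geqslant 1$, then $\bm{Z}_n$ decomposes as an independent sum of sub-processes, and the one seeded by a class-$R$ particle has the largest effective malthusian parameter $m=R(1-d)$, which strictly dominates the parameters $r(1-d)$ of all the others; hence its contribution dominates the normalized limit, and on the event that the class-$R$ sub-process survives the whole population inherits the asymptotic class distribution $\bm{u}(R)$. Then I would invoke Kurtz's convergence-of-classes theorem~\eqref{eq:KURTZ}, valid in this reducible setting, to get $\bm{Z}_n/|\bm{Z}_n|\to\bm{u}(R)$ almost surely on survival; this is exactly the assertion that the relative number of particles in each class stabilizes.

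The remaining step is purely a spectral computation: identify $\bm{u}(R)$ explicitly. This was in fact already done in the paper, in equation~\eqref{eq:NREIGEN} with $r=R$: the normalized right eigenvector of $\bm{M}$ corresponding to $m=R(1-d)$ has components $u_k=\binom{R}{k}(1-d)^k d^{R-k}=\mathrm{binom}(k;R,1-d)$. I would verify this directly from the upper-triangular form~\eqref{eq:MEANBASIC}: writing $\bm{M}\bm{u}=R(1-d)\bm{u}$ componentwise gives, for row $k$, the relation $(k+1)d\,u_{k+1}+k(1-d)u_k=R(1-d)u_k$, i.e. $u_{k+1}=\dfrac{(R-k)(1-d)}{(k+1)d}\,u_k$, which telescopes to the binomial weights; normalization $\sum_k u_k=1$ is then automatic since $\sum_k\mathrm{binom}(k;R,1-d)=1$. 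Substituting into Kurtz's limit yields the statement.

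I do not anticipate a genuine obstacle here, since all the heavy machinery is quoted from Section~\ref{sec:BRTBP}; the one point requiring a little care is the reduction from ``at least one particle of class $R$'' to the single-particle case and the argument that the class-$R$ sub-process dominates the others in the normalized limit (because its growth rate $R(1-d)$ strictly exceeds every $r(1-d)$ with $r<R$, the lower-rate sub-processes contribute a vanishing fraction after dividing by $m^n$). Everything else is either the cited Kesten--Stigum/Kurtz results or the elementary eigenvector recursion above.
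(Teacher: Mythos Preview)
Your proposal is correct and follows essentially the same route as the paper: invoke the Kesten--Stigum results for the reducible super-critical case (as quoted in Section~\ref{sec:BRTBP}) together with the explicit computation of the normalized right eigenvector~\eqref{eq:NREIGEN}. You supply considerably more detail than the paper's two-line proof---the reduction from ``at least one particle of class $R$'' to a single particle via the domination argument, the explicit eigenvector recursion, and the appeal to Kurtz's convergence-of-classes---but none of this departs from the paper's strategy; it merely fills in steps the paper leaves implicit.
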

\begin{proof}
This is consequence of the generalized Kesten-Stigum~\cite{KS66a,KS66b,KS67} results about
the asymptotic behaviour of decomposable (i.e., with non-primitive mean matrix)
super-critical multitype branching processes and the computation of the normalized right
eigenvector associated to the malthusian parameter $m=R(1-d)$ given by
equation~(\ref{eq:NREIGEN}).
\end{proof}
From theorem~\ref{thm:STATIONARYSTATE} we immediately obtain:
\begin{itemize}
\item The \emph{mean replication capacity} is
      \[
       \Expec(\bm{u})=R(1-d) \,.
      \]
\item The \emph{phenotypic diversity} is
      \[
       \Varia(\bm{u})=Rd(1-d) \,.
      \]
\end{itemize}
It is well accepted that the phenotypic diversity is an important characteristic of the
viral population intuitively related to the idea of population robustness~\cite{EWOL07}. 
In fact, a homogeneous population would be less flexible from the point of view of
adaptation. The variance associated with the stationary state can be understood
as a natural quantity to measure diversity. It shows that the maximum value of the
phenotypic diversity $r/4$ is reached if $d=1/2$ for any value of $r$.
If $R>2$ the variation of the phenotypic diversity as a function of $d$ shows that there
are two different domains to be considered: below $d=1/2$ the diversity is an increasing
function of $d$. It implies that if the population has a typical value of $d<1/2$ the
effect of inducing an increment of $d$ (for instance using mutagenic drugs) increases
the phenotypic diversity. For $1/2<d<d_c$ this effect reverses and diversity decreases
with increasing $d$. This result raises the question if in normal conditions the viral
population adapt to the host environment guided by a principle of maximum phenotypic
diversity or if the environmental conditions simply contribute to fix one possible value
of diversity for the population that may vary from one to another host organism.
Interesting enough, the natural deleterious mutations has been measured for certain
viruses and, as shown in the Table~\ref{tab:COMP}, they are close to the value $d=1/2$.
In the first case one could preview that the set point of the viral disease should be
invariant (or with small variation) for all hosts. On the other hand the second
hypothesis leads to the idea of different responses to treatment depending on the
initial value of $d$ before the adoption of treatment strategies to improve $d$.
At the present the two scenarios may apply to different type of viruses and this point
clearly has to be decided experimentally.

\renewcommand{\arraystretch}{1.5}

\begin{table}[ph]
\tbl{\label{tab:COMP}Experimental results of deleterious mutation rates: (VSV) vesicular
  stomatitis virus, (TEV) Tobacco etch virus and  ($\Phi$X174, Q$\beta$) bacterial
  viruses.}
{\centering
\begin{tabular}{c@{\quad}c@{\quad}c@{\quad}c} \hline
 Virus & $U_d$ & $(1-d)=\mathrm{e}^{-U_d}$ & REF. \\ \hline
 VSV & 0.692 & 0.500 & \cite{SME04} \\
 TEV & 0.773 & 0.461 & \cite{CIE07} \\
 $\Phi$X174 & 0.72 - 0.77 & 0.48 - 0.46 & \cite{CCS09} \\
 Q$\beta$ & 0.74 - 0.86 & 0.47 - 0.42 & \cite{CCS09} \\ \hline
\end{tabular}}
\end{table}

\renewcommand{\arraystretch}{1}

Another important consequence of the above results concerns the efficiency of the use of
mutagenic drugs. In the region $d<1/(R+1)<1/2$ the viral population's most
representative particle is the fittest one (class $R$). If we assume that the drug
action is deeply influenced by drug transport coefficients in different host tissues,
it is important to be assured that local drug concentrations will still eliminate the
set of class $R$ particles. If $d$ increases beyond $1/(R+1)$ the representative
particle of the population is not anymore the fittest one but a set of particles from
different replication classes. Therefore the main drug target represents a group of
average replicating particles of a population with higher phenotypic diversity in which
resistance drug mutants can be contained. In this case one would say that the viral
population displays a kind of endogenous strategy to scape the deleterious action of the
mutagenic drug.
If we assume that deleterious effects are small in the early stage of the infection
process we should expect that at this stage the drug efficiency would be maximum
reinforcing the successful practice of post exposure therapy, currently adopted in the
case of HIV infections\cite{KG97}.

\subsection{The Critical Case: Extinction Threshold}

The clearest way to characterize the time behavior of the viral population at or around
the critical point is through the typical time $T$ to approach equilibrium
derived from the decay of correlations described above.

The expression $T=1/\log(R(1-d))$ shows that at the critical point the equilibrium state
is never reached, i.e., the decay to equilibrium is at least non-exponential. 
A scaling exponent characterizing the behavior of $T$ in the neighborhood of the
critical point $d_{c}$ can be easily obtained.
The expansion around $d_{c}=(R-1)/R$ gives 
\[
 T \approx (1-d_{c})\,|d-d_c|^{-1} \,.
\]
Although it is always possible to calculate intermediate distributions of progeny,
it is quite easy to see that at the critical point the time evolution of densities never
achieves an invariant density.

Unlike in the super-critical regime, the relative number of particles in each
class/sub-population is never stable.
Nevertheless, our preliminary results concerning the dynamics of fluctuations show that
the time variation of the numbers of particles in each separated class follows a pattern
such that the variation observed in one class is rigorously the same observed in all the
others.
This indicate a high level of correlation between the classes in complete analogy with
critical phenomena of many physical systems. We conjecture that in the critical regime
the highly correlated classes in the population behave as an inseparable whole such that
the notion of the population divided in separated classes becomes meaningless.
In other words the correlation between classes makes them behave as if they constitute
one unique class, which reminds one of the basic properties of the error threshold in
Eigen's theory\cite{E71}.
In fact, according to Eigen, when mutational rates are increased beyond a threshold,
infinite viral populations are not anymore able to retain its best adapted variants.
At this critical mutation level, selection is overruled by mutation and all variants
share the same fitness status.
Moreover, populations at Eigen's error threshold do not become extinct, but well defined
replication classes cease to exit, as particles hazardously wander through the
surface of a flat landscape.

If in the super-critical case the notion of the mean replication capacity and therefore
that of the ``mean viral particle'' exists defining a typical scale in the system, in
the critical case this notion is absent. Therefore, in using branching processes to
model the time behavior of viral populations the concept of error threshold should be
identified with that of criticality. In the same direction of reasoning, in terms of
branching processes the existence of lethal mutagenesis should be identified with that
of critical behavior of the model.

The critical behavior of the model can also be observed through the survival 
probability function $\omega(d)$ for $d \lesssim d_c$ as show in Figure~\ref{fig:BASIC2}.
Expansion of the survival probability function $\omega(d)$ around the critical 
point $d_c$ using the system of equations for the extinction
probabilities~(\ref{eq:SYSEQ}) gives directly
\[
 \omega \approx 2 \frac{R}{d_c}\, |d-d_c| \,.
\]
It is interesting to note that the critical exponents of $T$ and $\omega$ are the
same found in critical behavior of a large class of dynamical random networks\cite{RW11}. 
However, it is noteworthy that here we talk about criticality of a process taking place
in time, and therefore the term \emph{critical phenomenon} (imported from equilibrium
statistical mechanics of space distributed systems) is used to highlight the fact that
the survival probability behaves like an order parameter and the amount of deleterious
effects quantified by $d$ behaves as a control parameter that can be changed by external
means.

This fact is reminiscent from the deep relation existing between branching process and 
random network theory, where the survival probability function of a branching process is
identified with the order parameter associated to the emergence of the giant cluster in
a dynamical random network. 
This fundamental observation goes back to Karp\cite{K90} and more recently it has become
the central technique in the study of more general models of random networks\cite{BR08}.
In this direction, it is worth to also note that there is a correspondence between Eigen's
model of molecular evolution and the equilibrium statistical mechanics of an inhomogeneous
Ising system\cite{L87}, again an indication of a relation between statistical mechanics,
random networks and branching processes.
The relation between these theories is certainly expected to bring important new insights
to virus evolution in the future.

\section{Conclusions and Outlook}
\label{sec:CO}

Using the previous theoretical model for virus evolution proposed by 
L\'azaro \etal\cite{LEDM02} and Aguirre \etal\cite{ALM09} as a starting point we
show that virus evolution can be described by an exact solvable multivariate branching
process.
By applying our approach we are able to identify crucial aspects of the dynamics of
replicating viral populations on a sound theoretical basis.
Among these several aspects we are able to demonstrate that -- as long as the beneficial
effects are close to zero -- the two main driving features of a virus population are the
maximum replication capacity and the fraction of the population not affected by
deleterious effects.
Based on this result we show that, as proposed by Bull~\etal, if the product between the
above mentioned parameters $m=R(1-d)$ yields a value less than one the population
undergoes extinction.
On the other hand, if $m=R(1-d)$ is greater than one and the environment is constant
we show that the population will reach an asymptotic stationary state characterized by
the stability of the replicative classes. 
However, the time to reach the stationary equilibrium strictly depends on how 
intense is the deleterious effect, more precisely, the higher $d$ the longer is
the transient phase and when $d$ approaches its critical value $d_c$ the transient 
tends to infinity.

According to our explicit formulas for the progeny distribution, we demonstrate that
virus populations maximize their phenotypic diversity by replicating with $d$ near
$1/2$, for any value of $R$. 
We speculate that this might be a universal property for RNA viruses that replicate
under high mutational rates. In this way by increasing their phenotypic diversity
viruses augments their chances of survival escaping and adapting to environmental
pressures. Maintenance of high mutation rates makes it difficult for a population to
retain their best replicative classes. As a consequence, the best adapted classes are not
the most represented ones in the population, thus not characterizing a classical
Darwinian evolution process.
As far as branching process modeling of viral evolution is concerned its critical
behavior partially resembles the concept of error threshold in Eigen's theory of
molecular quasispecies.
In this regime the replicative classes lose their independence in the sense that they
become so much correlated that the whole set of classes behaves as a single one.

We also demonstrate that by keeping the deleterious effects constant the survival
probability of a virus population will depend on its initial population size.
By increasing the population size at time zero we push the survival probability curves,
in the region before the critical point, towards one (see
Figure~\ref{fig:BASIC2} and equations (\ref{eq:SURVPROB1}), (\ref{eq:SURVPROB2})).
According to this result it can be speculated that virus with greater innoculums have a
better chance of survival colonizing new hosts. 
Interestingly enough and in a frontal disagreement to the above observations it has been
shown that only a limited number of particles, and in some cases even one particle, is
enough to start a new infectious process in a host\cite{K08,ZDE11}.
However, according to the model and as discussed before, the $R$ parameter determines
the success of an incoming virus population because the corresponding value of $d_c$ is
uniquely given by $R$. The present work suggests that minimum innoculums must have at
least one particle with replicative capacity large enough in order to survive in the new
host. We speculate that those particles with maximum replicative capacity should
constitute the effective innoculum described in Zwart \etal\cite{ZDE11}.
In fact, the experimental data about viral load in HIV early infected patients strongly
suggests that the host deleterious effects over the viral population are minimal and
increase after the onset of the immunological response\cite{RQCLSP10}.
We note that the characteristic form of this data can be easily reproduced by the model
(see Castro \etal\cite{D11,CABJ11}). 

Finally, it is important to mention that the close relation of the theory of branching
processes (as used in the present work) and dynamical Erd\"os-Renyi type networks
indicates that the latter may be brought to bear in the modeling of virus populations. 
The relation between these two theories is undoubtedly a research avenue with promising
potential to improve our knowledge of the dynamical laws governing the evolution and
adaptation of viral populations. 

\section*{Acknowledgments}
FA wish to acknowledge the support of CNPq through the grant PQ-313224/2009-9 and
thanks FAP-UNIFESP and BIOMAT Consortium for the financial support to present this 
work at the ``12th International Symposium on Mathematical and Computational Biology''.
FB recieved support from the Brazilian agency FAPESP. 
DC received financial support from the Brazilian agency CAPES. 

\appendix
\section{Review of the Theory of Branching Processes} 
\label{sec:BRTBP}

In this section we collect a few definitions and results from the theory of branching
process that will be necessary in our analysis of the phenotypic model.

\subsection*{The Mean Matrix of a Branching Process}

Consider a multitype branching process $\bm{Z}_n$ with offspring probability distribution
$\bm{\zeta}$ and probability generating function $\bm{f}$.
Suppose that $\bm{\zeta}$ has all its first moments finite and not all zero.
Then conditioning on the elementary initial populations $Z^r_0=1$ on may define the
\emph{mean matrix} $\bm{M}=\{M_{ij}\}$ of the multitype branching process $\bm{Z}_n$ by
\[
 M_{ij}=\Expec(Z_1^i|Z_0^j=1) \quad\forall\, i,j=0,\ldots,R \,.
\]
In general, a multitype Galton-Watson branching process can be classified into
\emph{decomposable} and \emph{indecomposable} according to which its mean matrix
is reducible or irreducible, respectively.
A non-negative matrix $\bm{M}=\{M_{ij}\}$ $(0 \leqslant i,j \leqslant R)$
is called \emph{irreducible} if for every pair of indices $i$ and $j$, there
exists a natural number $n$ such that $\big(\bm{M}^n\big)_{ij}>0$ and it is
called \emph{reducible} otherwise (see Gantmatcher\cite{G05}).
There is another characterization of irreducibility in terms of the graph of the
matrix.

The \emph{graph} $\mathcal{G}(\bm{M})$ of $\bm{M}$ is defined to be the directed
graph on $R$ nodes $\{0,1,\ldots,R\}$, each corresponding to a type of particle,
in which there is a directed edge leading from node $i$ to node $j$ if and only if
$M_{ij}\neq 0$.
A graph $\mathcal{G}(\bm{M})$ is called \emph{path connected} if for each pair
of nodes $(i,j)$ there is a sequence of directed edges leading from $i$ to $j$.
A matrix $\bm{M}$ is irreducible if and only if $\mathcal{G}(\bm{M})$ is path
connected (see Meyer\cite{M00}).

A multitype Galton-Watson branching process is called \emph{positively regular}
if its mean matrix $\bm{M}$ is \emph{primitive}, that is, $\bm{M}^n$ is positive
for some positive integer $n$.
In particular, a positively regular branching process is indecomposable, since
a primitive matrix is irreducible (see Gantmatcher\cite{G05} or Meyer\cite{M00}).
Positive regularity is a standard assumption in the study of multitype branching
processes, as it opens up the way to apply the powerful Perron-Frobenius theory
(see Harris\cite{H63} or Athreya and Ney\cite{AN72}).

\begin{example} \sl
The classification of the phenotypic model according to the irreducibility or
reducibility of its mean matrix is the following:
\begin{enumerate}[(i)]
\item In the version ``with zero class'' the mean matrix \eqref{eq:MEANGENERAL} or
      \eqref{eq:MEANBASIC} will have the first column filled with zeros,
      that is, they are not primitive matrices and thus the corresponding branching
      processes are not positively regular.
      Moreover, a quick look at the graph $\mathcal{G}(\bm{M})$ in 
      Figure~\ref{fig:GRAPH} (b) shows that the process is decomposable since the node
      corresponding to particles of type $0$ does not have a direct arrow leading to
      other nodes. 
      In the case of the simple phenotypic model, the corresponding
      graph $\mathcal{G}(\bm{M})$ is shown Figure~\ref{fig:GRAPH} (a).
      Note that there are no dotted arrows since the probability of beneficial
      effects is $0$ and so the graph is \emph{totally path disconnected}, in other
      words, each ``path component''of the graph consists of exactly one node.
\item In the version ``without zero class'' the mean matrix of both models can be
      obtained from \eqref{eq:MEANGENERAL} and \eqref{eq:MEANBASIC} by removing the
      first row and the first column.
      Now the general phenotypic model becomes positively regular, since
      the node corresponding to particles of class $0$ no longer exists.
      The simple phenotypic model still is decomposable, even without the
      node corresponding to particles of class $0$.
\end{enumerate}
\end{example}

\subsection*{Malthusian Parameter and Extinction Probability}

Let $\varrho(\bm{M})$ denote the \emph{spectral radius} of $\bm{M}$,
that is, if $\lambda_1,\ldots,\lambda_R$ are the eigenvalues of $\bm{M}$
then
\[
 \varrho(\bm{M})=\max\big\{|\lambda_r|\big\} \,.
\]
Since $\bm{M}$ is a non-negative matrix, it has at least one largest non-negative
eigenvalue which coincides with its spectral radius (see Gantmatcher\cite{G05} or
Meyer\cite{M00}).
When the largest eigenvalue is positive we shall call it, following Kimmel and
Axelrod\cite{K02}, the \emph{malthusian parameter} $m$ of the branching process
(see also Jagers \etal\cite{JKS07}). 

The malthusian parameter of a multitype Galton-Watson branching process
plays the same role as the mean of the probability distribution of the
offspring in a simple Galton-Watson process and its name is motivated by equation
\eqref{eq:MEANEVOL1}, which implies that $\varrho(\bm{M}^n)=m^n$, the average
population size increases or decreases at a geometric rate, in accordance with the
``Malthusian Law of Growth''.

Finally, it follows from the theory of non-negative matrices that there is a
\emph{left non-negative eigenvector} $\bm{v}$ and a
\emph{right non-negative eigenvector} $\bm{u}$ corresponding to the
eigenvalue $m$:
\[
 \bm{v}^{\mathrm{t}}\,\bm{M}=m\,\bm{v}^{\mathrm{t}}
 \qquad\text{and}\qquad
 \bm{M}\,\bm{u}=m\,\bm{u} \,,
\]
which can be normalized so that
\[
 \bm{v}^{\mathrm{t}}\bm{u}=1 
 \quad\text{and}\quad
 \bm{1}^{\mathrm{t}}\bm{u}=1 \,,
\]
where $\bm{v}^{\mathrm{t}}$ is the transposed of the vector $\bm{v}$.
Moreover, when $\bm{M}$ is irreducible the left and right eigenvectors
are positive (see Gantmatcher~\cite{G05} or Meyer~\cite{M00}).

Let $\bm{\gamma}=(\gamma_0,\ldots,\gamma_R)$ be the
\emph{vector of extinction probabilities}
\[
 \gamma_r=\Prob(\bm{Z}_n=0 \;\text{for some $n$}|Z_0^r=1) \,,
\]
the probability that the process eventually become extinct given that initially there
is exactly one particle of class $r$.
In general, when the initial condition is given by a vector of non-negative integers
$\bm{Z}_0=(Z_0^0,Z_0^1,\ldots,Z_0^R)$ the extinction probability is
\[
 \Prob(\bm{Z}_n=0 \;\text{for some $n$}|\bm{Z}_0)=\prod_{i=0}^R \,\gamma_i{}^{Z_0^i} \,.
\]

A basic result of the theory of branching processes is that the vector of extinction
probabilities $\bm{\gamma}$ is the solution in $[0,1]^R$ with smallest components of
the equation 
\begin{equation} \label{eq:FIXEDPOINT}
 \bm{f}(\bm{\gamma})=\bm{\gamma} \,,
\end{equation}
where $\bm{f}$ is the probability generating function.
Observe that $\bm{1}$ is always a fixed point of $\bm{f}$,
that is, a solution of equation \eqref{eq:FIXEDPOINT}.
Therefore, if there is no other solution of equation \eqref{eq:FIXEDPOINT}
in the unit cube $[0,1]^R$ then the process always has probability $1$ to become extinct.

The main classification result in the indecomposable case, states that there are only
three possible regimes (see Harris\cite{H63} or Athreya and Ney\cite{AN72}):
\begin{enumerate}[(i)]
\item If $m>1$ then $\bm{0}\leqslant\bm{\gamma}<\bm{1}$ is the unique stable
      fixed point of $\bm{f}$ in the unit cube $[0,1]^R$ different than $\bm{1}$
      and the branching process is called \emph{super-critical}.
      Therefore, with positive probability, the population will survive indefinitely.
\item If $m<1$ then $\bm{\gamma}=\bm{1}$ is the unique stable fixed point of
      $\bm{f}$ in the unit cube $[0,1]^R$ and the branching process is
      called \emph{sub-critical}.
      Therefore, with probability $1$, the process will become extinct in finite time.
\item If $m=1$ then $\bm{\gamma}=\bm{1}$ is the unique marginal fixed point of $\bm{f}$
      in the unit cube $[0,1]^R$ and the branching process is called \emph{critical}.
      Here, the expected time to extinction is infinite, despite the fact that
      extinction is bound to occur almost surely.      
\end{enumerate}

Unfortunately this theorem does not cover all the interesting cases, one important
example for us being the phenotypic model for viral evolution.
Nevertheless, one of the earliest results about decomposable branching processes is the
generalization of the classification, due to Sevastyanov (see Harris\cite{H63} and
Ji\v{r}ina\cite{J70}).
In the general decomposable case, there is a fourth alternative identified by
Sevastyanov\cite{H63,J70} and in order to formulate this condition we need to
introduce another important concept.

A multitype Galton-Watson branching process is called \emph{singular}
if its probability generating function is linear without constant term, that is,
$\bm{f}(\bm{z})=\bm{M}\bm{z}$.
In this case, there is no branching since each particle produces exactly one particle
that can be of any class and the process is equivalent to an ordinary finite Markov 
chain.
More generally, a decomposable process may have \emph{singular path components}.
Two nodes $i$ and $j$ are said to be in same \emph{path component} if there is a
sequence of directed edges leading from $i$ to $j$ and a sequence of directed
edges leading from $j$ to $i$.
This procedure defines a partition of the set of nodes into equivalence classes,
called \emph{path components} of the graph $\mathcal{G}(\bm{M})$.
We say that a path component $C$ of $\mathcal{G}(\bm{M})$ is a
\emph{singular path component}, if any particle whose class is in $C$ has probability
$1$ of producing, in the next generation, exactly one particle whose class is in $C$.
Equivalently, the component functions of the probability generating function
corresponding to the classes in a path component $C$ are linear functions of the
variables corresponding to the classes in the path component $C$.
In other words, the ``part'' of the probability generating function corresponding to the
classes in $C$ is that of a singular branching process. 
The existence of singular components is obviously an obstruction to extinction,
for instance, in a decomposable singular process all path components are singular.
In fact, the result of Sevastyanov states that if there is at least one
\emph{singular path component} then the branching process never become extinct, no matter
what is the value of the malthusian parameter.

\begin{example} \sl
The graph corresponding to the general phenotypic model (Figure~\ref{fig:GRAPH} (b))
have two path components: $\{0\}$ and $\{1,2,3,\ldots,R\}$.
In the simple phenotypic model (Figure~\ref{fig:GRAPH} (a)), the path components are
exactly the sets containing one node, $\{0\}$, $\{1\}$, \ldots, $\{R\}$.
From the expressions of the generating functions \eqref{EQ:genfunc1} and
\eqref{EQ:genfunc2} it is clear that there are no singular path components in any
of the models -- simple or general, ``with zero class'' or ``without zero class''.
Moreover, the general phenotypic model ``without zero class'' is 
\emph{positively regular}.
Therefore, the phenotypic model displays only the three regimes determined by the
malthusian parameter, which depends on the values of the parameters $b,c,d$ and $R$. 
\end{example}

It is important to stress that the regime of a multitype branching process can not
be read from the mean matrix alone (i.e, the malthusian parameter).
Essentially this happens because of the existence of decomposable branching processes
with singular components.

\begin{example} \sl
Consider the following generating functions:
\[
\begin{split}
 \bm{g}(z,w) & = \big(1/2+1/2z^2,(dz+cw)^2\big) \,, \\
 \bm{h}(z,w) & = \big(z,(dz+cw)^2\big) \,,
\end{split}
\]
where $0<c,d<1$ and $c+d=1$.
They have the same mean matrix given by
$\bm{M}=\bigl(\begin{smallmatrix}
 1 & 2d \\
 0 & 2c \\
\end{smallmatrix}\bigr)$ and so the malthusian parameter is $m=\max\{1,2c\}$.
It is easy to solve the fixed point equation \eqref{eq:FIXEDPOINT}
in both cases and compute the respective extinction probability vectors 
$(\gamma_1,\gamma_2)$: for the function $\bm{g}$ we have that
$\gamma_1=1$ and $\gamma_2=d^2/c^2$ if $0 \leqslant d \leqslant\tfrac{1}{2}$
and $\gamma_2=1$ if $\tfrac{1}{2} \leqslant d \leqslant 1$.
For the function $\bm{h}$ we have that $\gamma_1=\gamma_2=0$.
Therefore, the branching process defined by $\bm{g}$ becomes extinct if and only if
$c\leqslant 1/2$ while the branching process defined by $\bm{h}$ never becomes extinct
irrespective of the value of the malthusian parameter!
\end{example}

\subsection*{Asymptotic Behaviour of Surviving Populations}

According to the ``Malthusian Law of Growth'' it is expected that a super-critical
branching process will grow indefinitely at a geometric rate proportional to $m^n$
and we would like to write $\bm{Z}_n \approx m^n \,\bm{W}_n$, where $\bm{W}_n$
is a random vector with a finite ``asymptotic distribution of classes'' when
$n \to \infty$.
The formalization of this heuristic argument is due to Kesten and Stigum
(see Kesten-Stigum\cite{KS66a,KS66b} for the case of indecomposable multitype branching
processes and Kesten-Stigum\cite{KS67} for the case of a general decomposable multitype
branching processes).

Let us first recall the result in the indecomposable case (see Athreya and Ney\cite{AN72}). 
Consider a super-critical branching process with $m>1$ and suppose that
the vector valued random variable $\bm{\zeta}$ satisfies a technical condition called
\emph{Kesten-Stigum ``$\zeta\log\zeta$'' condition} (see Lyons \etal\cite{LPP95} and
Olofsson\cite{O98}), which is always satisfied in our case, since the probability
distribution of the offsprings has finite support.
It is natural to define the normalized random vector $\bm{W}_n=\bm{Z}_n/m^n$.
This normalized random vector has a limit when $n \to \infty$, that is,
there exists a scalar random variable $W \neq 0$ such that, with probability one,
\[
 \lim_{n\to\infty} \bm{W}_n=W \,\bm{u} \,,
\]
where $\bm{u}$ is the normalized right eigenvector corresponding to the malthusian
parameter $m$ ($\bm{1}^{\mathrm{t}}\bm{u}=1$) and
\[
 \Expec(W|\bm{Z}_0)=\bm{v}^{\mathrm{t}} \bm{Z}_0
\]
where $\bm{v}$ is the normalized left eigenvector corresponding to the malthusian
parameter $m$ ($\bm{v}^{\mathrm{t}}\bm{u}=1$).

\pagebreak

An important step in the proof of Kesten-Stigum theorem is the Kurtz\cite{KLPP94}
\emph{convergence of classes} theorem:
\begin{equation} \label{eq:KURTZ}
 \lim_{n\to\infty} \,\dfrac{\bm{Z}_n}{|\bm{Z}_n|}=\bm{u} 
 \qquad\text{(almost surely).}
\end{equation}
Combining this with the Perron-Frobenius theorem (see Meyer\cite{M00}) one obtains
\begin{equation} \label{eq:KPF}
 \lim_{n\to\infty} \,\dfrac{\bm{Z}_n}{|\bm{Z}_n|}=
 \lim_{n\to\infty} \,\dfrac{\langle\bm{Z}_n\rangle}{|\langle\bm{Z}_n\rangle|}=
 \bm{u} \,,
\end{equation}
where the convergence in the first limit is in probability.
The approach adopted in Cuesta \etal\cite{CACM11,CCMA11} relates only to the second
equality involving the limit of mean values of equation \eqref{eq:KPF}.
By explicitly considering the microscopic model as a multivariate branching process the
equality of the two limits in equation \eqref{eq:KPF} is guaranteed.
This result may be useful for computational simulations of the model, since one
may compute the eigenvector $\bm{u}$ by sampling the population and taking
averages.

The meaning of the Kesten-Stigum theorem is that the total size of the population
divided by $m^n$, converges to a random vector, but the relative proportions of the
various ``classes'' approach fixed limits.
Since we are assuming that the process is indecomposable the normalized right
eigenvector $\bm{u}=(u_0,\ldots,u_R)$ is positive and is normalized so that $\sum_r
u_r=1$, therefore it defines a probability distribution on the set of classes
$\{0,\ldots,R\}$.
It is called the \emph{asymptotic distribution of classes} of the multitype branching
process. 

In order to extend these results to the case where the branching process is decomposable
one should employ the \emph{Frobenius normal form} of the mean matrix $\bm{M}$, which
is reducible in this case (see Gantmatcher\cite{G05}).
Kesten and Stigum\cite{KS67} show that it is possible, by rearranging the rows and
columns, to rewrite the mean matrix in a block upper triangular form in such a way
that the diagonal blocks are irreducible square matrices associated to components
of the decomposable branching process.
By a \emph{component} of a decomposable branching process we mean a subset of classes 
such that their associated nodes in the graph $\mathcal{G}(\bm{M})$ forms a path
component.
Let $\{C_k~:~0 \leqslant k \leqslant N\}$ be the set of components of
$\mathcal{G}(\bm{M})$ ordered according to which $C_k\prec C_l$ if there is a sequence
of directed edges leading from some $i \in C_k$ to some $j \in C_l$.
Given two components $C_k$ and $C_l$ define the sub-matrix
\[
 \bm{M}(k,l)=M_{ij} \quad\text{with}\quad i\in C_k \,, j\in C_l \,.
\]
Then, for each $k$, the square sub-matrix $\bm{M}(k)=\bm{M}(k,k)$ is the irreducible
mean matrix of the sub-process
\[
 \bm{Z}_n(k)=\{Z_n^i~:~i\in C_k\} \,.
\]
Now the order of the components $C_k$ allows us to rearrange the rows and columns of
$\bm{M}$ in such a way that
\[
 \bm{M}=\begin{pmatrix}
 \bm{M}(0) & \bm{M}(0,1) & \bm{M}(0,2) & \cdots & \bm{M}(0,N) \\
 0         & \bm{M}(1)   & \bm{M}(1,2) & \cdots & \bm{M}(1,N) \\
 0         & 0           & \bm{M}(2)   & \cdots & \bm{M}(2,N) \\
 \vdots    & \vdots      & \vdots      & \ddots & \vdots \\
 0         & 0           & 0           & 0      & \bm{M}(N)
\end{pmatrix}
\]
Therefore, the sub-process $\bm{Z}_n(k)$ ``receives input'' from the sub-process
$\bm{Z}_n(l)$, with $k<l$, throughout the sub-matrix $\bm{M}(k,l)$.
Note that if the sub-matrices $\bm{M}(k,l)$ are all zero then the branching
process splits as a sum of independent indecomposable branching processes.

\begin{example} \sl
The matrices \eqref{eq:MEANBASIC} and \eqref{eq:MEANGENERAL} of the simple
and the general phenotypic models, respectively, already are in the normal form:
\begin{enumerate}[(i)]
\item In the simple phenotypic model we have that $N=R$, with one-dimensional
      diagonal sub-matrices $\bm{M}(k)=\big(k(1-d)\big)$, with one-dimensional
      sub-matrices $\bm{M}(k,k+1)=\big((k+1)d\big)$ and one-dimensional sub-matrices
      $\bm{M}(k,l)=0$ if $l>k+1$.
\item In the general phenotypic model we have $N=1$, with the first diagonal
      sub-matrix $\bm{M}(0)=\big(0\big)$ (or $\bm{M}(0)=\big(1 \big)$ for the first
      variation of the model), the second diagonal sub-matrix $\bm{M}(1)=M_{ij}$, 
      with $i,j=1,\ldots,R$ and $\bm{M}(0,1)=\big(d~0 \cdots 0)$.
\end{enumerate}
\end{example}

Now observe that if $Z_0^i=1$ with $i\in C_k$ then for $l>k$, the sub-process
$\bm{Z}_n(l)=\bm{0}$ for all $n\geqslant 0$.
That is, the branching process behaves as if the sub-processes $\bm{Z}_n(l)$
for all $l>k$ did not exist.
Since each non-zero diagonal sub-matrix $\bm{M}(l)$ is irreducible, it has
a largest positive eigenvalue $m(l)$ and then we may define the 
\emph{effective malthusian parameter} of the sequence of sub-processes
$(\bm{Z}_n(0),\ldots,\bm{Z}_n(k))$ to be
\[
 m_{\mathrm{e}}(k)=\max_{l\leqslant k}\{m(l)\} \,.
\]
The simplest case is when all $m(l)$ are simple eigenvalues of their
respective sub-matrices $\bm{M}(l)$ -- they are distinct amongst each other --
 this is exactly the case for matrices \eqref{eq:MEANBASIC} and 
\eqref{eq:MEANGENERAL}.

In Kesten and Stigum\cite{KS67} the result about the asymptotic behaviour
of irreducible super-critical branching process is generalized to the reducible case. 
The main theorem applied to the case where all $m(l)$ are simple eigenvalues of their
respective sub-matrices $\bm{M}(l)$ states that if the effective malthusian
$m_{\mathrm{e}}(k)>1$ and the ``$\zeta\log\zeta$'' condition holds then for the
normalized random vector $\bm{W}_n(k)=\bm{Z}_n/(m_{\mathrm{e}}(k))^n$ there exists a
scalar random variable $W \neq 0$ such that, with probability one,
\[
 \lim_{n\to\infty} \bm{W}_n(k)=W \,\bm{u}(k) \,,
\]
where $\bm{u}(k)$ is the normalized right eigenvector corresponding to the
effective malthusian parameter $m_{\mathrm{e}}(k)$ and
\[
 \Expec(W|\bm{Z}_0)=\bm{v}(k)^{\mathrm{t}}\bm{Z}_0 \,,
\]
where $\bm{v}(k)$ is the left eigenvector corresponding to the effective malthusian
parameter $m_{\mathrm{e}}(k)$.
Moreover, Kurtz's \emph{convergence of classes} theorem~\eqref{eq:KURTZ} still holds.
But one should note that the normalized right and left eigenvectors are not
positive anymore.
In fact, $\bm{v}(k)$ may have negative entries, but only those associated
to the components $C_l$ with $l \leqslant k$, for which $\bm{Z}_0$ is zero.
The right normalized eigenvector is of the form $\bm{u}(k)=(u_0,\ldots,u_r,0,\ldots,0)$,
where $(u_0,\ldots,u_r)$ is the non-negative right normalized eigenvector of the
sub-matrix corresponding to the the sequence of sub-processes
$(\bm{Z}_n(0),\ldots,\bm{Z}_n(k))$, and so is a probability distribution.

\subsection*{Critical Behavior and Regime Transition}

The critical state separates the super-critical and the sub-critical regimes where the
branching process has two distinct behaviors in time and thus characterizes the
existence of regime transition with genuine critical behavior.
In fact, the decay of correlation functions described in the next section
for the case of the simplest model clarifies this point.

\enlargethispage{5mm}

Although in a critical branching process $\bm{Z}_n\to 0$, almost surely, when
$n\to\infty$, one still may obtain a meaningful asymptotic law by conditioning on
non-extinction. 
See Mullikin\cite{M63} and Joffe and Spitzer\cite{JS67} for the indecomposable
case and Foster and Ney\cite{FN78} for certain decomposable cases.

In the indecomposable critical case $\bm{Z}_n$ grows at a linear
rate proportional to $n$ (see Harris\cite{H63} or Athreya and Ney\cite{AN72}),
and so one should consider the normalized random vector $\bm{Y}_n=\bm{Z}_n/n$. 
If the second moments are finite and the branching process is non-singular, there is a
scalar random variable $Y \neq 0$ such that
\[
 \lim_{n\to\infty} \bm{Y}_n=Y \bm{u} \quad\text{given that } \bm{Z}_n\neq 0 \,,
\]
where $\bm{u}$ is the normalized right eigenvector corresponding to the malthusian
parameter $m$ and with convergence only in \emph{distribution}, which is weaker than
the \emph{almost surely convergence} in the super-critical case.

\nocite{*}

\bibliographystyle{plain}
\bibliography{biomat2012}


\end{document}